\definecolor{orcidlogocol}{HTML}{A6CE39}
\tikzset{
orcidlogo/.pic={
\fill[orcidlogocol] svg{M256,128c0,70.7-57.3,128-128,128C57.3,256,0,198.7,0,128C0,57.3,57.3,0,128,0C198.7,0,256,57.3,256,128z};
\fill[white] svg{M86.3,186.2H70.9V79.1h15.4v48.4V186.2z}
svg{M108.9,79.1h41.6c39.6,0,57,28.3,57,53.6c0,27.5-21.5,53.6-56.8,53.6h-41.8V79.1z M124.3,172.4h24.5c34.9,0,42.9-26.5,42.9-39.7c0-21.5-13.7-39.7-43.7-39.7h-23.7V172.4z}
svg{M88.7,56.8c0,5.5-4.5,10.1-10.1,10.1c-5.6,0-10.1-4.6-10.1-10.1c0-5.6,4.5-10.1,10.1-10.1C84.2,46.7,88.7,51.3,88.7,56.8z};
}
}
\newcommand\orcidicon[1]{\href{https://orcid.org/#1}{\mbox{\scalerel*{
\begin{tikzpicture}[yscale=-1,transform shape]
\pic{orcidlogo};
\end{tikzpicture}
}{|}}}}
\newtheorem{proof}{Proof}
\newtheorem{proposition}{Proposition}
\newtheorem{lemma}{Lemma}
\begin{document} 
\author{
\IEEEauthorblockN{
~Jalal Jalali$^{\dag}$,
~Filip Lemic$^{\sharp}$,
~Hina Tabassum$^{\star}$,
~Rafael Berkvens$^{\dag}$,~and
Jeroen Famaey$^{\dag}$}
$^{\dag}$\text{ID}Lab, University of Antwerp - imec, Sint-Pietersvliet 7, 2000 Antwerp, Belgium \\
$^{\sharp}$AI-Driven Systems Lab, i2Cat Foundation, Barcelona, Spain\\
$^{\star}$Department of Electrical Engineering and Computer Science, York University, Toronto, ON, Canada\\
Email:  \href{mailto:jalal.jalali@uantwerpen.be}{\texttt{jalal.jalali@uantwerpen.be}}
\vspace{0mm} 
}

\title{\huge Toward Energy Efficient Multiuser IRS-Assisted URLLC Systems: A Novel Rank Relaxation Method\vspace{-0mm}}

\maketitle

\begin{abstract}
This paper proposes an energy efficient resource allocation design algorithm for an intelligent reflecting surface (IRS)-assisted downlink ultra-reliable low-latency communication (URLLC) network. This setup features a multi-antenna base station (BS) transmitting data traffic to a group of URLLC users with short packet lengths. We maximize the total network's energy efficiency (EE) through the optimization of active beamformers at the BS and passive beamformers (a.k.a phase shifts) at the IRS. The main non-convex problem is divided into two sub-problems. An alternating optimization (AO) approach is then used to solve the problem. Through the use of the successive convex approximation (SCA) with a novel iterative rank relaxation method, we construct a concave-convex objective function for each sub-problem. The first sub-problem is a fractional program that is solved using the Dinkelbach method and a penalty-based approach. The second sub-problem is then solved based on semi-definite programming (SDP) and the penalty-based approach. The iterative solution gradually approaches the rank-one for both the active beamforming and unit modulus IRS phase-shift sub-problems. Our results demonstrate the efficacy of the proposed solution compared to existing benchmarks.

\end{abstract}
\begin{IEEEkeywords}
Alternating optimization (AO), intelligent reflecting surface (IRS), semi-definite programming (SDP), ultra-reliable~low-latency~communication~(URLLC).
\end{IEEEkeywords}

\vspace{-0mm}
\section{Introduction}
\IEEEPARstart{I}{ntelligent reflecting surfaces} (IRSs) have recently emerged as a cost-effective and energy efficient technology to reconfigure wireless environments through passive reflections. Unlike conventional relays, IRSs do not require signal decoding or amplification~\cite{8910627}. The integration of IRSs into the communication network offers enhanced reliability, reduced packet re-transmissions, and latency. Consequently, IRSs can efficiently realize ultra-reliable low-latency communications \!(URLLC) with finite block length regime~\cite{8472907}. 
Unlike infinite blocklength regime, the achievable rate in the finite blocklength regime is a complex function of the signal-to-interference-plus-noise ratio \!(SINR), blocklength, and the probability of decoding errors~\cite{5452208}.

Recently, a handful of research studies investigated resource allocation in IRS-assisted URLLC networks. For instance, Hashemi \emph{et al.} presented the performance analysis of the average achievable data-rate and error probability over an IRS-aided URLLC transmission with and without phase noise~\cite{9517013}. Considering non-linear energy harvesting, the end-to-end performance of the IRS-assisted wireless system was analyzed in~\cite{9499071} for industrial URLLC applications, and the approximate closed-form expression of the block error rate was derived. 
Xie \emph{et al.} studied an IRS-assisted downlink multiuser URLLC system and jointly optimized the user grouping and the blocklength allocation at the base station 
(BS), as well as the reflective beamforming at the IRS for latency minimization~\cite{9519632}. 
Li \emph{et al.} investigated maximizing uplink data-rate in a URLLC system by jointly optimizing user transmit power, BS receive beamformer, and IRS reflection phase shifts under given error probability and blocklength conditions~\cite{10103618}. 
Very recently, an IRS-assisted machine-type communication network was considered in \cite{10100913} where multi-objective optimization was applied to simultaneously maximize EE and the number of admitted users with short packet length. The problem is solved by using fractional programming approach, i.e., quadratic transformation.
where a non-practical signal-to-interference-plus-noise ratio (SINR) bound was considered with no hard constraint on~the delay~\cite{10100913}
None of the aforementioned research works considered an \textit{energy efficient design of IRS-assisted URLLC} networks, although other performance metrics were considered in~\cite{9638478,10014778,9599656,sunn,10103618,10118968,VTC_accepted}. 
 
In this paper, we design an energy efficient beamforming and phase-shift optimization algorithm for downlink IRS-assisted URLLC services, wherein a multi-antenna BS serves multiple single-antenna URLLC receivers with the support of an IRS. To this end, we formulate an energy efficiency maximization problem, subject to the minimum required data-rate (in the finite blocklength regime) for each URLLC user and unit-modulus constraints at the IRS. We decompose the problem into two sub-problems and apply an alternating optimization (AO) approach to optimize both active and passive beamformers at the BS and IRS, respectively. Using an innovative iterative rank relaxation and successive convex approximation (SCA) method, we construct a concave-convex objective function for each sub-problem. Then, the first sub-problem (which is also a fractional program) is solved by applying the Dinkelbach method and a penalty-based approach; whereas, the second sub-problem is solved by applying the penalty-based approach and semi-definite programming (SDP).
The iterative solution gradually approaches the rank-one for both the active beamforming and unit modulus IRS phase-shift sub-problems. 
Our simulation results indicate the benefits of the IRS in enhancing the energy efficiency of URLLC systems and provide a comparison with benchmarks. 

\vspace{2.5mm}
\textit{Notations:} 
In this work, matrices and vectors are denoted by boldface capital letters, like $\mathbf{A}$, and lower-case letters, such as $\mathbf{a}$, respectively. 
If we consider a square matrix $\mathbf{A}$, $\mathbf{A}^T$, $\mathbf{A}^H$, ${\rm{rank}}(\mathbf{A})$, $\text{Tr}(\mathbf{A})$, and $||\mathbf{A}||_{*}$ denote its transpose, Hermitian conjugate transpose, rank, trace, and norm, respectively. 
$\mathbf{I}_N$ represents the $N \times N$ identity matrix. The diagonalization operation is represented by $\rm{diag}(\cdot)$. 
$\Re\{\cdot\}$ is used to denote the real part of a complex function. 
We express the absolute value of a complex scalar and the Euclidean norm of a complex vector by $|\cdot|$. 
The distribution of a circularly symmetric complex Gaussian (CSCG) random vector with mean $\boldsymbol{\mu}$ and covariance matrix $\mathbf{C}$ is denoted as $\mathcal{C}\mathcal{N}(\boldsymbol{\mu},\mathbf{C})$. 
$Q^{-1}(\cdot)$ represents the inverse of the Gaussian Q-function. 
$\mathbb{C}^{M\times N}$ denotes an $M\times N$ dimensional complex matrix, and $\nabla_{\mathbf{x}}$ expresses the gradient vector with respect to~$\mathbf{x}$.
Finally, we express ${N\times N}$ positive semi-definite matrices as $\mathbf{A} \in \mathbb{S}^{N}_{+}$ and read $ \mathbf{A}\succeq\mathbf{0}$.

This paper is structured as follows: In Section~\ref{section_2}, we introduce the system model and outline the proposed EE optimization problem. 
The resource allocation algorithm to solve the EE problem is detailed in Section~\ref{section_4}. 
In Section~\ref{section_5}, we assess the performance of our novel algorithm. 
Finally, Section~\ref{section_6} draws conclusions.

\begin{figure}
\centering
\includegraphics[width=0.95\linewidth]{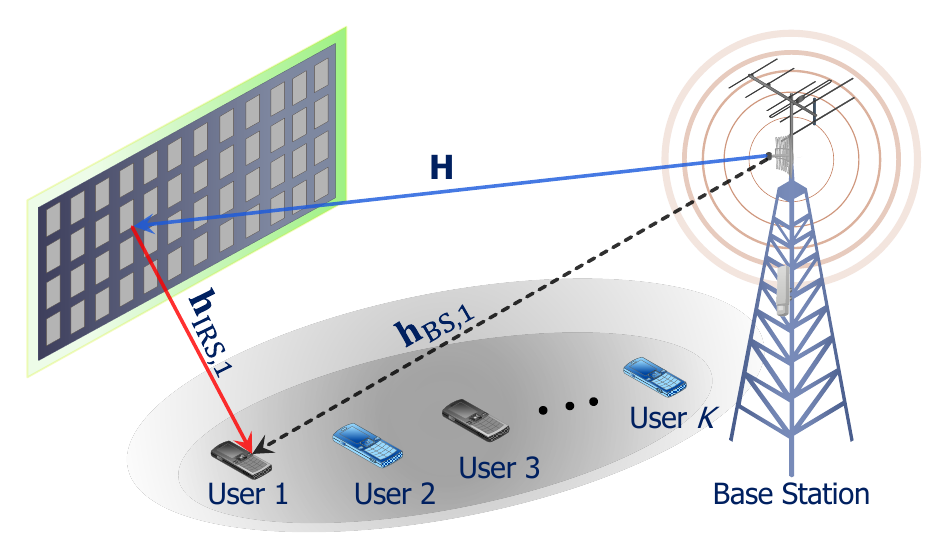}
\label{fig:system_model}
\caption{Illustration of a multi-user IRS-assisted downlink network comprising one BS and $K$ URLLC users employing finite block length transmissions. ${\mathbf{H}}$ is the channel matrix between the BS and the IRS, ${\mathbf{h}}_{\rm{IRS},1}$ denotes the channel response vector from the IRS to user 1, and $\mathbf{h}_{\rm{BS},1}$ indicates the channel between the BS and user 1.}
\end{figure}

\section{System Model and Problem Formulation} \label{section_2}
We consider a downlink multiuser URLLC-enabled IRS-assisted system, represented in Fig. \ref{fig:system_model}, which comprises an IRS with $N$ elements, a BS equipped with $M$ antennas, and $K$ users, each with a single antenna.
The group of IRS elements, BS antennas, and users are denoted by the sets $\mathcal{N}=\{1,...,N\}$, $\mathcal{M}=\{1,...,M\}$, and $\mathcal{K}=\{1,...,K\}$, respectively.
Additionally, we suppose that $B_k$ bits of information are assigned to user $k$. In this setup, the BS transforms these information bits into a block code in time slot $l$, characterized by a length of $m_d$ symbols. We decode the $k$-th user block code as $x_{k,l}$, with $l$ being part of the set $\mathcal{L} = \{1,2,...,m_d\}$.

The transmission signal emanating from the BS can be expressed as 
$\mathbf{s}{[l]}=
\sum_{k\in \mathcal{K}} 
{{\boldsymbol{\omega}_{k,l}}
{x_{k,l}}}$, 
where ${\boldsymbol{\omega}}_{k,l}\in \mathbb{C}^{M \times 1}$ 
represents the beamforming vector for user $k$.
The channel links exhibit time invariance (i.e., slow fading). Additionally, we assume that the BS possesses full access to the channel state information (CSI) and  URLLC users' delay requirements~\cite{10014778,10100913,Ghanem,10074428}. 
The baseband equivalent channel responses for BS-to-IRS, IRS-to-user $k$, and BS-to-user $k$ are denoted as
${\mathbf{H}}\in \mathbb{C}^{N \times M}$,
${\mathbf{h}}_{\rm{IRS},k}\in \mathbb{C}^{N\times 1}$, and 
$\mathbf{h}_{\rm{BS},k}\in \mathbb{C}^{M\times 1}$,
respectively. 
Also, we define the matrix of reflection coefficients at the IRS as
$\boldsymbol{\Psi}_l=
\text{diag} 
(\alpha_{1,l} e^{j\phi_{1,l}}, 
\alpha_{2,l} e^{j\phi_{2,l}},..., 
\alpha_{N,l} e^{j\phi_{N,l}})$, where 
$\alpha_{n,l} \in [0,1]$ and 
$\phi_{n,l} \in (0,2\pi]$, $\forall n \in \mathcal{N}, \forall l \in \mathcal{L}$ are the reflection amplitude and phase shift of the $n$-th reflection coefficient at the IRS during time slot $l$, respectively. 

By defining the combined channel link seen by the $k$-th URLLC user as $\mathbf{ h}_k^{H}\triangleq {\mathbf{h}}_{\rm{IRS},k}^{H}\boldsymbol{\Psi}_l {\mathbf{{\mathbf{H}}}}+\mathbf{ h}_{\rm{BS},k}^{H}$, 
$\forall k \in \mathcal{K},$ 
$\forall l \in \mathcal{L}$, the SINR of that user can be expressed as:
\begin{equation}\label{sinr}
\Gamma_{k,l}= 
\frac{{{\left| \mathbf{h}_{k}^{H} {\boldsymbol{\omega}}_{k,l} \right|}^2}}{\sum\limits_{i \ne k, i  \in \mathcal{K}}{{\left| \mathbf{h}_{k}^{H} {\boldsymbol{\omega}}_{i,l} \right|}^2} + \sigma _k^2}, 
\forall k \in \mathcal{K},
\forall l \in \mathcal{L},
\end{equation}
where $\sigma_{k}^2$ represents  the noise variance received at user $k$. 
In the context of URLLC systems, an accurate estimate of the achievable data-rate for each user can be delineated as follows~\cite{Blocklength}:
\begin{align}\label{rate}
\mathcal{R}_k(\boldsymbol{\omega}_{k,l},\boldsymbol{\Psi}_l)=
\mathcal{U}_k
(\boldsymbol{\omega}_{k,l},\boldsymbol{\Psi}_l)-
\mathcal{V}_k
(\boldsymbol{\omega}_{k,l},\boldsymbol{\Psi}_l),
~\forall k  \in  \mathcal{K},
\end{align}
where
\begin{align}
\mathcal{U}_k(\boldsymbol{\omega}_{k,l},\boldsymbol{\Psi}_l)
&=
\sum\limits_{l\in\mathcal{L}}\log_2 (1 + {\Gamma_{k,l} }),  
\ \ \ \ \ \ \ \ \forall k \in \mathcal{K},  
\end{align}
\begin{align}
\mathcal{V}_k(\boldsymbol{\omega}_{k,l},\boldsymbol{\Psi}_l)
&=
\sum\limits_{l\in\mathcal{L}}
\beta_k
\sqrt{\Delta_{k,l}},~~~~~~~~~~~~~~\forall k \in 
\mathcal{K},
\\
\beta_k
&=
\frac{Q^{-1}(\epsilon_k)}{\sqrt{m_d}},~~~~~~~~~~~~~~~~~~~\forall k \in 
\mathcal{K}.
\end{align}
Here, $\epsilon_k$ refers to the decoding error probability, $m_d$, as defined earlier, is representative of the blocklength, and $\Delta_{k,l}$ the channel dispersion, computed as:
\begin{equation}
\Delta_{k,l}=
(\log_2 e)^2
\left(
1-\frac{1}{\left(1+\Gamma_{k,l}\right)^2}
\right), 
\forall k \in \mathcal{K},
\forall l \in \mathcal{L}.   
\end{equation}
To ensure users' quality of service (QoS) regarding the received number of bits, the reliability, and the latency, a minimum data-rate denoted by $R_{\min,k}$ should be satisfied for each user as follows:
\begin{equation}
\mathcal{R}_k(\boldsymbol{\omega}_{k,l},\!\boldsymbol{\Psi}_l) \geq R_{\min,k},\: \forall k \in \mathcal{K}.
\end{equation}
Next, we describe the network's energy efficiency (EE) as the ratio of the total system data-rate over the associated network power consumption in~$\rm{[bits\slash Joule \slash Hz]}$:
\begin{equation}
\mathscr{\eta}_{\mathrm{eff}}(\boldsymbol{\omega}_{k,l},\! \boldsymbol{\Psi}_l)=
\frac{\sum\limits_{k\in \mathcal{K}}\mathcal{R}_k(\boldsymbol{\omega}_{k,l},\!\boldsymbol{\Psi}_l)}{\sum\limits_{k\in \mathcal{K}} {{\sum\limits_{l\in\mathcal{L}}{\left\| {\boldsymbol{\omega}}_{k,l} \right\|}^2}}
+
P_{\rm{IRS}}
+
NP_{d}
+
P_{\rm{c}}^{\rm{BS}}}, \label{EE_formula}
\end{equation}
where $P_{\rm{IRS}}$ represents the static power consumption necessary for sustaining the basic circuit operations of the IRS, while $P_d$ is the dynamic power expended per reflecting component, and $P_{\rm{c}}^{\rm{BS}}$ denotes the circuit power at the BS.

Our objective is to maximize the total EE by optimizing the active beamformers at the BS and the phase shifts at the IRS. Consequently, the problem can be articulated as follows:
\begin{subequations}
\begin{align}
\text{P}_1&: \mathop {\max}
\limits_{\boldsymbol{\omega}_{k,l},\boldsymbol{\Psi}_l} 
\: \:\: 
\mathscr{\eta}_{\mathrm{eff}}(\boldsymbol{\omega}_{k,l},\boldsymbol{\Psi}_l)
\nonumber\\
s.t.&:
\mathcal{R}_k(\boldsymbol{\omega}_{k,l},\!\boldsymbol{\Psi}_l) \geq R_{\min,k},
\forall k \in \mathcal{K},
\label{p3-3}\\
&~~
|\boldsymbol{\Psi}_{{nn},l}|=1,
~ ~\ \ \ \ \ \ \ \ ~~~~
\forall n \in \mathcal{N},
\forall l \in \mathcal{L},
\label{11g}\\
&~~
\sum
\limits_{l\in\mathcal{L}}
\sum\limits_{k\in \mathcal{K}} 
{\left\| 
{\boldsymbol{\omega}}_{k,l} 
\right\|}^2 
\leq 
p_{\max}, 
\label{p3-4}\\
&~~\boldsymbol{\omega}_{k,l} =0,
~ ~\ \ \ \ \ \ \ \ ~~~~~~~ \:
\forall k \in \mathcal{K}, 
\forall l \geq {\rm{T}}_{k}, 
\label{p4-4}\hspace{-2mm}
\end{align}
\end{subequations}
where \eqref{p3-3} is the minimum data-rate requirement ($R_{\min,k}$) of the $k$-th URLLC user.
Constraint \eqref{11g} seeks that the diagonal phase shift matrix contains $N$ unit-modulus elements along its main diagonal. The constraint \eqref{p3-4} defines the limitation of the BS transmission power budget, with $p_{\max}$ signifying the maximum permissible BS transmission power. Constraint \eqref{p4-4} is imposed to safeguard the real-time URLLC service functionality, ensuring that user $k$ receives service within the first ${\rm{T}}_{k}$ time slots to satisfy its delay requirements.

 Given the non-convex nature of the objective function and constraints in $\text{P}_1$, in the following, we propose an AO-based low-complexity solution.

\section{Energy Efficient Design for IRS-Assisted URLLC Network}\label{section_4}
In this section, we decompose $\text{P}_1$ into two sub-problems. The first sub-problem focuses on determining the optimal active beamformers at the BS, while the second sub-problem concentrates on optimizing the phase shifts at the IRS. 
We adopt the SCA technique to initially derive a lower bound for $\mathcal{R}_k(\boldsymbol{\omega}_{k,l},\boldsymbol{\Psi}_l)$, which will subsequently be utilized when addressing each sub-problem.


\begin{lemma}\label{lemma_1}
Let $(j)$ denote the subscript associated with the feasible solution procured in the $j$-th iteration of the SCA algorithm. A lower bound approximation of $\mathcal{R}_k(\boldsymbol{\omega}_{k,l},\boldsymbol{\Psi}_l)$ is given by:   
\begin{align}
&\mathcal{R}_k(\boldsymbol{\omega}_{k,l},\boldsymbol{\Psi}_l) 
\geq
\tilde{\mathcal{R}}_{k}(\boldsymbol{\omega}_{k,l},\boldsymbol{\Psi}_l)=
\sum\limits_{l\in\mathcal{L}}
\tilde{\mathcal{R}}_{k,l}^{(j)}(\boldsymbol{\omega}_{k,l},\boldsymbol{\Psi}_l),
\\
&\!\!\!
\tilde{\mathcal{R}}_{k,l}^{(j)}(\boldsymbol{\omega}_{k,l},\boldsymbol{\Psi}_l) 
\triangleq
\!-
\gamma_{k,l}\mathfrak{b}_{k,l}
\!+\!
2\rho_{k,l}
\Re\left\{\!\!
\sum\limits_{i \ne k, i  \in \mathcal{K}}
\mathfrak{a}_{i,k,l}^{(j)}\mathfrak{a}_{i,k,l}^H
\!+\!
\sigma^2_k
\!\right\}
\nonumber\\ 
&~~~~~~~~~~~~~~~~~~ \:
+2
\Re\left\{
\frac{\mathfrak{a}_{k,k,l}^{(j)}\mathfrak{a}_{k,k,l}^H}{{\mathfrak{b}_{k,l}^{(j)}(\mathfrak{b}_{k,l}^{(j)}-|\mathfrak{a}_{k,k,l}^{(j)}|^2)}}
\right\} + \xi_{k,l},
\nonumber\\ 
&~~~~~~~~~~~~~~~~~~~~~~~~~~~~~~~~~~~~~~~~~~~~~
\forall k \in \mathcal{K},
\forall l \in \mathcal{L},
\end{align}
where
\begin{align}
\mathfrak{a}_{k,i,l}
&= \mathbf{h}_{k}^{H} {\boldsymbol{\omega}}_{i,l},
\label{bound_c1}
\\
\mathfrak{b}_{k,l}
&=
{\sum\limits_{i  \in \mathcal{K}}{{\left| \mathbf{h}_{k}^{H} {\boldsymbol{\omega}}_{i,l} \right|}^2} + \sigma _k^2},
\label{bound_c2}
\\
\rho_{k,l}
&=
\frac{\beta_k(\log_2 e)^2}{\mathfrak{b}_{k,l}^{(j)}\sqrt{(\Gamma^{(j)}_{k,l})^2+2\Gamma^{(j)}_{k,l}}}
\label{bound_c3}
\\
\gamma_{k,l}
&=
\frac{|\mathfrak{a}_{k,k,l}^{(j)}|^2}
{\mathfrak{b}_{k,l}^{(j)}(\mathfrak{b}_{k,l}^{(j)}-|\mathfrak{a}_{k,k,l}^{(j)}|^2)}
+ \frac{\rho_{k,l}(\mathfrak{b}_{k,l}^{(j)}-|\mathfrak{a}_{k,k,l}^{(j)}|^2)}{\mathfrak{b}_{k,l}^{(j)}},
\label{bound_c4}
\end{align}

\begin{align} 
\xi_{k,l} 
&= \log_2 (1 + \Gamma_{k,l}^{(j)})-
\frac{|\mathfrak{a}_{k,k,l}^{(j)}|^2}
{\mathfrak{b}_{k,l}^{(j)}(\mathfrak{b}_{k,l}^{(j)}-|\mathfrak{a}_{k,k,l}^{(j)}|^2)}\nonumber\\
&-
\frac{\beta_k(2\Delta_{k,l}^{(j)}+(\log_2 e)^2)}{2\sqrt{\Delta_{k,l}^{(j)}}}-
\frac{\rho_{k,l}\mathfrak{b}_{k,l}^{(j)}}{2(1+\Gamma^{(j)}_{k,l})}.
\label{bound_c5}
\end{align}
In the $(j+1)$-th iteration, the data-rate function $\tilde{\mathcal{R}}_{k,l}(\boldsymbol{\omega}_{k,l},\boldsymbol{\Psi}_l)$ exhibits concavity with respect to each variable, reaching its boundary at the point $(\boldsymbol{\omega}_{k,l}^{(j)},\boldsymbol{\Psi}_l^{(j)})$.
\end{lemma}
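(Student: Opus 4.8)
Here is how I would approach the proof.

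The plan is to prove both the inequality and the concavity claim summand-by-summand over $l\in\mathcal{L}$, starting from the decomposition $\mathcal{R}_k=\mathcal{U}_k-\mathcal{V}_k$ in \eqref{rate}. Fix $k$ and $l$ and set $\mathcal{R}_{k,l}:=\log_2(1+\Gamma_{k,l})-\beta_k\sqrt{\Delta_{k,l}}$. The key starting observation is that $\Gamma_{k,l}$, and hence both terms of $\mathcal{R}_{k,l}$, depend on $(\boldsymbol{\omega}_{k,l},\boldsymbol{\Psi}_l)$ only through the signal power $|\mathfrak{a}_{k,k,l}|^2$ and the aggregate received power $\mathfrak{b}_{k,l}$ of \eqref{bound_c1}--\eqref{bound_c2}, because $1+\Gamma_{k,l}=\mathfrak{b}_{k,l}/(\mathfrak{b}_{k,l}-|\mathfrak{a}_{k,k,l}|^2)$ and $\Delta_{k,l}$ is a function of $\Gamma_{k,l}$ alone. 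Two elementary facts drive every subsequent step: (i) after substituting $\mathbf{h}_k^{H}=\mathbf{h}_{\mathrm{IRS},k}^{H}\boldsymbol{\Psi}_l\mathbf{H}+\mathbf{h}_{\mathrm{BS},k}^{H}$, each $\mathfrak{a}_{k,i,l}=\mathbf{h}_k^{H}\boldsymbol{\omega}_{i,l}$ is affine in $\boldsymbol{\omega}_{i,l}$ for fixed $\boldsymbol{\Psi}_l$ and affine in $\boldsymbol{\Psi}_l$ for fixed $\boldsymbol{\omega}_{i,l}$; hence $|\mathfrak{a}_{k,i,l}|^2$ and $\mathfrak{b}_{k,l}$ are convex in each block of variables, so $-c\,\mathfrak{b}_{k,l}$ is concave for any $c\ge 0$; and (ii) wherever a convex quantity such as $|\mathfrak{a}_{k,i,l}|^2$ enters with a nonnegative multiplier it may be replaced by its tangent minorant $2\Re\{\mathfrak{a}_{k,i,l}^{(j)}\mathfrak{a}_{k,i,l}^{H}\}-|\mathfrak{a}_{k,i,l}^{(j)}|^2$, which keeps the overall bound valid and turns that piece affine.

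For the first term I would minorize $\log_2(1+\Gamma_{k,l})$ using the elementary inequality $\ln u\ge 1-u^{-1}$ (equivalently the standard SCA bound for $\log(1+\mathrm{SINR})$) with $u=(1+\Gamma_{k,l})/(1+\Gamma_{k,l}^{(j)})$; writing $(1+\Gamma_{k,l})^{-1}=1-|\mathfrak{a}_{k,k,l}|^2/\mathfrak{b}_{k,l}$ this yields a minorant that is affine in the ratio $|\mathfrak{a}_{k,k,l}|^2/\mathfrak{b}_{k,l}$. Since $(\mathfrak{a}_{k,k,l},\mathfrak{b}_{k,l})\mapsto|\mathfrak{a}_{k,k,l}|^2/\mathfrak{b}_{k,l}$ is jointly convex on $\{\mathfrak{b}_{k,l}>0\}$, replacing the ratio by its tangent at $(\mathfrak{a}_{k,k,l}^{(j)},\mathfrak{b}_{k,l}^{(j)})$ keeps the bound valid and produces a term of the form $-(\mathrm{const})\,\mathfrak{b}_{k,l}+2\Re\{(\mathrm{const})\,\mathfrak{a}_{k,k,l}^{(j)}\mathfrak{a}_{k,k,l}^{H}\}+(\mathrm{const})$: this accounts for the first summand of $\gamma_{k,l}$ in \eqref{bound_c4}, the $\mathfrak{a}_{k,k,l}^{(j)}\mathfrak{a}_{k,k,l}^{H}$-term of the claimed bound, and part of $\xi_{k,l}$ in \eqref{bound_c5}.

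For the dispersion term, which is \emph{subtracted} and is concave in the SINR, I would combine: (a) the tangent inequality $\sqrt{x}\le(x+\bar x)/(2\sqrt{\bar x})$, reducing the task to majorizing $\Delta_{k,l}$; (b) the fact that $\Delta_{k,l}=(\log_2 e)^2(2\tau_{k,l}-\tau_{k,l}^2)$ with $\tau_{k,l}:=|\mathfrak{a}_{k,k,l}|^2/\mathfrak{b}_{k,l}$ is concave and increasing in $\tau_{k,l}$, so its tangent at $\tau_{k,l}^{(j)}$ is an overestimate; and (c) the same joint-tangent/linearization treatment of $\tau_{k,l}$ and of the interference-plus-noise power $\mathfrak{b}_{k,l}-|\mathfrak{a}_{k,k,l}|^2$. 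Collecting the resulting terms produces the coefficient $\rho_{k,l}$ of \eqref{bound_c3}, the second summand of $\gamma_{k,l}$, and the $2\rho_{k,l}\Re\{\cdots\}$-term. Adding the two minorants and absorbing all leftover constants into a single number yields $\xi_{k,l}$, which is fixed precisely so that every tangent and linearization is met with equality at $(\boldsymbol{\omega}_{k,l}^{(j)},\boldsymbol{\Psi}_l^{(j)})$; summing over $l$ then gives $\mathcal{R}_k\ge\tilde{\mathcal{R}}_k$ with equality at that point, i.e. the stated boundary contact.

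Concavity is immediate once the bound is in this form: $2\rho_{k,l}\Re\{\cdots\}$, $2\Re\{\cdots\}$ and $\xi_{k,l}$ are affine in $\boldsymbol{\omega}_{k,l}$ and, since $\mathbf{h}_k^{H}$ is affine in $\boldsymbol{\Psi}_l$, affine in $\boldsymbol{\Psi}_l$ too, while $-\gamma_{k,l}\mathfrak{b}_{k,l}$ is a nonpositive multiple of the convex quadratic $\mathfrak{b}_{k,l}$ — here one checks $\gamma_{k,l}\ge 0$, which holds because $\rho_{k,l}\ge 0$ and $\mathfrak{b}_{k,l}^{(j)}-|\mathfrak{a}_{k,k,l}^{(j)}|^2=\sum_{i\ne k}|\mathfrak{a}_{k,i,l}^{(j)}|^2+\sigma_k^2\ge\sigma_k^2>0$. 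I expect the dispersion term to be the main obstacle: unlike $\log_2(1+\Gamma_{k,l})$ it cannot be handled by a single convexity argument, since a naive first-order expansion only gives a local majorant of $\sqrt{\Delta_{k,l}}$; the work is in chaining the $\sqrt{\cdot}$, $\Delta_{k,l}(\tau_{k,l})$ and $\tau_{k,l}(|\mathfrak{a}_{k,k,l}|^2,\mathfrak{b}_{k,l})$ estimates so that the composite is simultaneously a globally valid minorant of $-\beta_k\sqrt{\Delta_{k,l}}$, concave in each block of variables, tight at the iterate, and reducible to exactly the compact form with coefficients $\rho_{k,l}$ and the second part of $\gamma_{k,l}$. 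Once that is arranged, pinning down the normalization constants in \eqref{bound_c3}--\eqref{bound_c5} (including the $\log_2$-versus-$\ln$ factors) is routine but lengthy bookkeeping.
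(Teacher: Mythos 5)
Your proposal is correct and follows essentially the same route as the paper's Appendix~\ref{Appen_A}: a Majorization--Minimization/first-order-Taylor surrogate for the $\log_2(1+\Gamma_{k,l})$ term via the tangent of the jointly convex quadratic-over-linear ratio, and the same three-stage chain for the dispersion penalty (tangent of $\sqrt{\cdot}$, tangent of the quadratic in $\Xi_{k,l}=1/(1+\Gamma_{k,l})$, then the tangent minorant of $(\mathfrak{b}_{k,l}-|\mathfrak{a}_{k,k,l}|^2)/\mathfrak{b}_{k,l}$ as in \eqref{MM_delta_2}--\eqref{MM_delta_3}), with all bounds tight at the iterate. Your use of $\ln u\geq 1-u^{-1}$ and of $\tau_{k,l}=1-\Xi_{k,l}$ are merely equivalent reparametrizations of the paper's steps, so no substantive difference remains beyond the constant bookkeeping you already flag.
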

\begin{proof}
See Appendix~\ref{Appen_A}.
\quad\quad\quad\quad\quad
\quad\quad\quad\quad\quad
\quad\quad\quad
\IEEEQEDhere
\end{proof}
The lower bound presented is more manageable compared to the original data-rate function $\tilde{\mathcal{R}}_k(\boldsymbol{\omega}_{k,l},\boldsymbol{\Psi}_l)$ in \eqref{rate}. 
However, this bound still involves coupled optimization variables. To address this, we implement the AO approach. Specifically, we optimize $\boldsymbol{\omega}_{k,l}$ and $ \boldsymbol{\Psi}_l$ by alternatively refining each variable while keeping the others constant.

\subsection{ Sub-problem~$^{(1)}$: Optimizing $\boldsymbol{\omega}_{k,l}$ with given $\boldsymbol{\Psi}_l$}
\vspace{1mm}
In this sub-problem, 
we presume that the passive reflecting elements at the IRS,
i.e., $\boldsymbol{\Psi}_l$
are fixed, and we proceed with designing the active beamformers, $\boldsymbol{\omega}_{k,l}$, at the BS.
By employing the principles of semi-definite programming (SDP), we derive the following:
$\boldsymbol{\Omega}_{k,l}= \boldsymbol{\omega}_{k,l}\boldsymbol{\omega}_{k,l}^H\in \mathbb{S}^{M}_{+}$ 
and 
$\mathbf{H}_k= \mathbf{h}_k\mathbf{h}_k^H \in \mathbb{S}^{M}_{+}$,
$\forall k \in \mathcal{K}$.
Leveraging on the aforementioned SDP relaxation, we can rewrite the SINR in \eqref{sinr} as follows:
\vspace{2mm}
\begin{equation}\label{sinrr}
\Gamma_{k,l}= \frac{{{ \text{Tr}({\mathbf{{\rm{ {\mathbf{H}}}}}}_{k}^{H} \boldsymbol{\Omega}_{k,l} )}}}{{\sum\limits_{i  \in \mathcal{K},i \ne k}  {{ \text{Tr}({\mathbf{{\rm{ {\mathbf{H}}}}}}_{k}^{H} \boldsymbol{\Omega}_{i,l}) } + \sigma _k^2} }}, 
\forall k \in \mathcal{K},
\forall l \in \mathcal{L},
\end{equation}
Consequently, the data-rate function defined in \eqref{rate} can be reshaped as $\mathcal{R}_k(\boldsymbol{\Omega}_{k,l})=\mathcal{U}_k(\boldsymbol{\Omega}_{k,l})-\mathcal{V}_k(\boldsymbol{\Omega}_{k,l})$. 
With the aid of the SDP transformations, the EE optimization problem $\text{P}_1$ can be reformulated as follows:
\vspace{1mm}
\begin{subequations}
\begin{align}
\text{P}_2&: \mathop {\max}\limits_{\boldsymbol{\Omega}_{k,l}} \: \:\: \mathscr{\eta}_{\mathrm{eff}}(\boldsymbol{\Omega}_{k,l})\nonumber\\
s.t.&:~
\mathcal{R}_k(\boldsymbol{\Omega}_{k,l})\geq R_{\min,k},
\forall k \in \mathcal{K},
\label{p2-1}\\
&~~~\boldsymbol{\Omega}_{k,l}\succeq\mathbf{0}, 
\ \ \ \ \ \ \ \ \ \ \ \ \: \:
\forall k \in \mathcal{K},
\forall l \in \mathcal{L},
\label{p2-2}\\
&~~~{\rm{rank}}(\boldsymbol{\Omega}_{k,l})\leq 1, \ \  \ \: \ \  
\forall k \in \mathcal{K},
\forall l \in \mathcal{L}, 
\label{p2-3}\\
&~~~\sum\limits_{l\in\mathcal{L}}\sum\limits_{k\in \mathcal{K}}\text{Tr}(\boldsymbol{\Omega}_{k,l}) \leq p_{{\max}}
\label{p2_4},\\
&~~~\text{Tr}(\boldsymbol{\Omega}_{k,l})= 0,~ \ \ \ \ \ \ \  ~\forall k \in \mathcal{K}, \forall l\geq {\rm{T}}_{k}
\label{p2_5}.
\end{align}
\end{subequations}
\vspace{1mm}
It is important to note that the constraint \eqref{p2-1} in $\text{P}_2$ does not exhibit concavity. To circumvent this non-concavity, we employ the outcome from \textit{Lemma \ref{lemma_1}}. A proposed surrogate lower bound of $\mathcal{R}_k(\boldsymbol{\Omega}_{k,l})$ is introduced, where it is guaranteed that $\mathcal{R}_k(\boldsymbol{\Omega}_{k,l})\geq \tilde{\mathcal{R}}_k(\boldsymbol{\Omega}_{k,l})$. 
Hence, we can re-express $\text{P}_2$ as:
\vspace{1mm}
\begin{subequations}
\begin{align}
\text{P}_3&: \mathop {\max}\limits_{\boldsymbol{\Omega}_{k,l}} \: \:\: 
\tilde{\mathscr{\eta}}_{\mathrm{eff}}(\boldsymbol{\Omega}_{k,l})
=
\frac{\sum\limits_{k\in \mathcal{K}}\tilde{\mathcal{R}}_k(\boldsymbol{\Omega}_{k,l})}{\mathcal{E}_{\rm{tot}}(\boldsymbol{\Omega}_{k,l})}
\nonumber\\
s.t.&:~
\tilde{\mathcal{R}}_k(\boldsymbol{\Omega}_{k,l})\geq R_{\min,k},  \ \ \ \ \ \forall k \in \mathcal{K},\label{p3-c1}\\
&~~~\eqref{p2-2}-\eqref{p2_5}. \nonumber
\end{align}
\end{subequations}
\vspace{1mm}
where
$\mathcal{E}_{\rm{tot}}(\boldsymbol{\Omega}_{k,l})=
\sum_{l\in\mathcal{L}}
\sum_{k\in \mathcal{K}}
\text{Tr}
(\boldsymbol{\Omega}_{k,l})+
P_{\rm{IRS}}+
NP_{d}+
P_{\text{c}}^{\text{BS}}$.
The numerator of the objective function and constraint \eqref{p3-c1} exhibit the form of convex-concave functions. Nevertheless,  the overall problem remains non-convex due to the presence of the non-convex rank-one constraint \eqref{p2-3} and the fractional objective function. 
Our forthcoming focus firstly centers on resolving the non-convex rank-one constraint \eqref{p2-3}, which emerges within the context of our optimization problem.
Fortunately, an innovative method has been proposed in~\cite{doi:10.1137/17M1147214}, specifically devised to address these types of rank constraints. 
We hereby direct our attention to the ensuing proposition. 
\begin{proposition}\label{propos_rank}
Consider a nonzero positive semi-definite  beamforming matrix, denoted as 
$\boldsymbol{\Omega}_{k,l} \in \mathbb{S}^{M}_{+}$, where $M$ represents the dimensionality of the matrix. 
We assert that $\boldsymbol{\Omega}_{k,l}$ is a rank one matrix if and only if the inequality 
$\varpi_{k,l}\mathbf{I}_{M-1} - \mho^{T}\boldsymbol{\Omega}_{k,l}\mho \succeq\mathbf{0}$ 
holds true, where $\varpi_{k,l} = 0$, 
$\mathbf{I}_{M-1}$ denotes an identity matrix with dimension $M-1$, and 
$\mho \in \mathbb{R}^{M \times (M-1)}$
corresponds to the eigenvectors corresponding to the $M-1$ smallest eigenvalues of $\boldsymbol{\Omega}_{k,l}$.
\end{proposition}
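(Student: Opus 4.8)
The plan is to turn the linear matrix inequality into an elementary statement about the eigenvalues of $\boldsymbol{\Omega}_{k,l}$ via the spectral theorem. Write the eigendecomposition $\boldsymbol{\Omega}_{k,l} = \sum_{i=1}^{M}\lambda_i\,\mathbf{u}_i\mathbf{u}_i^{T}$ with orthonormal eigenvectors $\mathbf{u}_1,\ldots,\mathbf{u}_M$ and ordered eigenvalues $\lambda_1 \ge \lambda_2 \ge \cdots \ge \lambda_M$. Because $\boldsymbol{\Omega}_{k,l}\succeq\mathbf{0}$ we have $\lambda_i \ge 0$ for all $i$, and because $\boldsymbol{\Omega}_{k,l}\neq\mathbf{0}$ we have $\lambda_1 > 0$. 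By definition $\mho = [\,\mathbf{u}_2\ \cdots\ \mathbf{u}_M\,]$ collects the eigenvectors attached to the $M-1$ smallest eigenvalues, so $\boldsymbol{\Omega}_{k,l}\mho = \mho\,\mathrm{diag}(\lambda_2,\ldots,\lambda_M)$, and hence, using orthonormality $\mho^{T}\mho = \mathbf{I}_{M-1}$, we obtain the key identity $\mho^{T}\boldsymbol{\Omega}_{k,l}\mho = \mathrm{diag}(\lambda_2,\ldots,\lambda_M)$. Substituting $\varpi_{k,l}=0$, the asserted condition $\varpi_{k,l}\mathbf{I}_{M-1} - \mho^{T}\boldsymbol{\Omega}_{k,l}\mho \succeq \mathbf{0}$ is therefore equivalent to $\mathrm{diag}(\lambda_2,\ldots,\lambda_M)\preceq\mathbf{0}$, i.e.\ to $\lambda_i \le 0$ for every $i\in\{2,\ldots,M\}$.

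From here both implications are immediate. For the ``if'' direction, combining $\lambda_i \le 0$ with the PSD bound $\lambda_i \ge 0$ forces $\lambda_2 = \cdots = \lambda_M = 0$; together with $\lambda_1 > 0$ this yields $\boldsymbol{\Omega}_{k,l} = \lambda_1 \mathbf{u}_1\mathbf{u}_1^{T}$, a rank-one matrix. For the ``only if'' direction, $\mathrm{rank}(\boldsymbol{\Omega}_{k,l})=1$ means every eigenvalue except $\lambda_1$ vanishes, so $\mho^{T}\boldsymbol{\Omega}_{k,l}\mho = \mathbf{0}$ and the inequality holds (with equality). An equivalent, more SDP-flavored phrasing that I would include for robustness: $\mho^{T}\boldsymbol{\Omega}_{k,l}\mho$ is a congruence of a PSD matrix, hence PSD; the displayed inequality with $\varpi_{k,l}=0$ additionally requires it to be negative semi-definite; a matrix that is both PSD and NSD is $\mathbf{0}$; and $\mho^{T}\boldsymbol{\Omega}_{k,l}\mho = \mathbf{0}$ holds iff its trace $\sum_{i=2}^{M}\lambda_i = \mathrm{Tr}(\boldsymbol{\Omega}_{k,l}) - \lambda_{\max}(\boldsymbol{\Omega}_{k,l})$ is zero, i.e.\ iff only the leading eigenvalue survives.

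There is no serious obstacle here; the result is a one-line consequence of the spectral theorem once the identity $\mho^{T}\boldsymbol{\Omega}_{k,l}\mho = \mathrm{diag}(\lambda_2,\ldots,\lambda_M)$ is in place. The one point I would treat with a careful sentence is the well-posedness of $\mho$ when $\boldsymbol{\Omega}_{k,l}$ has a repeated eigenvalue: the ``$M-1$ smallest eigenvalues'' then fix the columns of $\mho$ only up to an orthonormal change of basis inside a degenerate eigenspace, but every admissible choice still consists of eigenvectors of $\boldsymbol{\Omega}_{k,l}$, so $\mho^{T}\boldsymbol{\Omega}_{k,l}\mho$ is always $\mathrm{diag}(\lambda_2,\ldots,\lambda_M)$ up to a permutation, which does not affect definiteness, and the equivalence is unchanged (the same argument also carries over verbatim to the Hermitian case with $\mho^{H}$ in place of $\mho^{T}$). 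I would also add a short remark relating this to the construction of \cite{doi:10.1137/17M1147214}, where an auxiliary scalar plays the role of $\varpi_{k,l}$ and is driven to $0$ within the iterative scheme; pinning $\varpi_{k,l}=0$ from the outset is precisely what makes the condition collapse to ``all but the leading eigenvalue vanish,'' i.e.\ to exact rank-one.
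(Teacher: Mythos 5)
Your proof is correct and follows essentially the same route as the paper's Appendix~B: diagonalize $\boldsymbol{\Omega}_{k,l}$, observe that $\mho^{T}\boldsymbol{\Omega}_{k,l}\mho$ is the diagonal matrix of the $M-1$ smallest eigenvalues, and combine the resulting condition $\lambda_i \le 0$ with positive semi-definiteness to force those eigenvalues to vanish. Your version is in fact somewhat more complete than the paper's, since you spell out both implications explicitly and address the well-posedness of $\mho$ under repeated eigenvalues, which the paper glosses over.
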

\begin{proof}
See Appendix~\ref{Appen_B}.
\quad\quad\quad\quad\quad
\quad\quad\quad\quad\quad
\quad\quad\quad
\IEEEQEDhere
\end{proof}
According to \textit{proposition \ref{propos_rank}}, we can replace the rank constraint \eqref{p3-c1} with a positive semi-definite constraint:
\begin{equation}\label{rank_one}
\varpi_{k,l}
\mathbf{I}_{M-1} - 
\mho^{{(q)}^T}
\boldsymbol{\Omega}_{k,l}
\mho^{(q)} 
\succeq\mathbf{0},
\forall k \in \mathcal{K},
\forall l \in \mathcal{L}.    
\end{equation} 
Given the unavailability of $\mho^{{(q)}^T}$, we resort to the SCA method and employ the smallest eigenvectors of $\boldsymbol{\Omega}_{k,l}$. 
To achieve the goal of ultimately having $\varpi_{k,l}=0$ while facilitating the attainment of an initial feasible point, we introduce a penalty term for $\varpi_{k,l}$ in the objective function. Thus, at SCA iteration $(q)$, the following convex problem is addressed: 
\begin{subequations}
\begin{align}
\text{P}_4&: 
\mathop {\max}
\limits_{\boldsymbol{\Omega}_{k,l},\varpi_{k,l}} 
\: \:\: 
\tilde{\mathscr{\eta}}_{\mathrm{eff}}(\boldsymbol{\Omega}_{k,l})
-
\varsigma^{(q)}
\sum_{l\in\mathcal{L}}
\sum_{k\in \mathcal{K}}
\varpi_{k,l}
\nonumber\\
s.t.&:~
\eqref{p2-2},\eqref{p2_4},
\eqref{p2_5},\eqref{p3-c1},
\eqref{rank_one}. 
\nonumber
\end{align}
\end{subequations}
Note that the penalty factor's updates follow an interior iterative sub-algorithm with the formula 
$\varsigma^{(q)}=
\min(\nu\varsigma^{(q-1)},\vartheta_{\max})$, 
where $\nu$ 
is a positive coefficient factor, and $\vartheta_{\max}$ is the maximum allowable penalty factor.

The objective function in $\text{P}_4$ is a fractional function, rendering the optimization problem to be still non-convex. 
To tackle this issue, we adopt Dinkelbach method~\cite{Dinkelbach}, with $(d)$ as its iteration, to address the fractional form of the optimization problem. 
Denoting 
$\tilde{\mathscr{\eta}}_{\mathrm{eff}}^*(\boldsymbol{\Omega}_{k,l})$
as the optimal energy efficiency point within the feasible solution set defined by its constraints, we then transform the problem into a non-fractional optimization problem for further resolution, i.e.,
\begin{subequations}
\begin{align}\hspace*{-3mm}
\text{P}_5&:
\!\!\!\!
\mathop {\max}
\limits_{\boldsymbol{\Omega}_{k,l},\varpi_{k,l}} 
\!\!
\sum\limits_{k\in \mathcal{K}} 
\!\tilde{\mathcal{R}}_{k}(\boldsymbol{\Omega}_{k,l}^{(d)})
\!-\!
\varrho^{(d)} 
\mathcal{E}_{\rm{tot}}(\boldsymbol{\Omega}_{k,l}^{(d)})
\!-\!
\varsigma^{(d)}
\!\sum_{l\in\mathcal{L}}
\!\sum_{k\in \mathcal{K}}
\!\varpi_{k,l}
\nonumber\\
\hspace{-10mm}
s.t.&:~
\eqref{p2-2},\eqref{p2_4},
\eqref{p2_5},\eqref{p3-c1},
\eqref{rank_one}, 
\nonumber
\end{align}
\end{subequations}
where $\varrho^{(d)}= \underset{\boldsymbol{\Omega}_{k,l}^{(d)}}{{\max}} 
~\tilde{\mathscr{\eta}}_{\mathrm{eff}}^{(d)}
(\boldsymbol{\Omega}_{k,l}^{(d)})$,
and $\varsigma^{(d)}$ is the new penalty factor update rules
$\varsigma^{(d)}=
\min(\nu\varsigma^{(d-1)},\vartheta_{\max})\times\mathcal{E}_{\rm{tot}}(\boldsymbol{\Omega}_{k,l}^{(d)})$. 
Finally, the convexity of the objective function in $\text{P}_5$ with respect to the active beamforming variables can be demonstrated through a formal proof based on the following proposition.

\begin{algorithm}[t]\vspace{5pt}
\caption{\strut Iterative SCA-based Resource Allocation Algorithm}
\begin{algorithmic}[1]\label{alg1}
\renewcommand{\algorithmicrequire}{\textbf{Input:}}
\renewcommand{\algorithmicensure}{\textbf{Output:}}
\REQUIRE 
Set ${d}=0$ for the Dinkelbach procedure, 
set the maximum number of iteration ${D}_{\max}$, 
initialize the beamformer matrix $\boldsymbol{\Omega}_{k,l}=\boldsymbol{\Omega}_{k,l}^{0}$,
initialize $\varsigma^{0}$,
set $\vartheta_{\max} \gg 1$, $\nu>1$, and set
the tolerance $\varepsilon=10^{-3}$.
\\    
\STATE \textbf{repeat}\\
\STATE \quad Calculate  $\tilde{\mathcal{R}}_{k}(\boldsymbol{\omega}_{k,l},\boldsymbol{\Psi}_l)$ for a given $\boldsymbol{\Psi}_l$.
\STATE \quad Calculate the rank-one relaxation constraint in \eqref{rank_one}.
\STATE \quad Solve $\text{P}_5$ for $ \varrho^{(d-1)} $.\\
\STATE \quad \quad \textbf{if}
$ \big|
\sum\limits_{k\in \mathcal{K}} 
\!\tilde{\mathcal{R}}_{k}(\boldsymbol{\Omega}_{k,l}^{(d)})
-
\varrho^{(d-1)} 
\mathcal{E}_{\rm{tot}}(\boldsymbol{\Omega}_{k,l}^{(d)})$\\
\quad ~\quad\quad\quad\quad\quad\quad\quad\quad$
-
\varsigma^{(d)}
\sum\limits_{l\in\mathcal{L}}
\sum\limits_{k\in \mathcal{K}}
\varpi_{k,l} 
\big|
\!\leq\! 
\varepsilon $\\
\STATE\quad \quad \quad \textbf{return} 
$ \boldsymbol{\Omega}_{k,l}= \boldsymbol{\Omega}_{k,l}^{(d)}$, 
$ \varrho^{*}=\varrho^{(d-1)} $.\\
\STATE\quad \quad \textbf{else}  $ \varrho^{(d)}=  \tilde{\mathscr{\eta}}_{eff}^{(d)}
(\boldsymbol{\Omega}_{k,l}^{(d)})$,
\textbf{end if}.\\
\STATE\quad Update $\varsigma^{(d)}=
\min(\nu\varsigma^{(d-1)},\vartheta_{\max})$.  \\
\STATE\quad ${d}\leftarrow {d}+1$.  \\
\STATE\quad \textbf{until} ${d}={D}_{{\max}}$.
\STATE   \textbf{return} $\boldsymbol{\Omega}_{k,l}=\boldsymbol{\Omega}_{k,l}^{*}$.
\end{algorithmic}\vspace*{3pt}
\end{algorithm}
\begin{proposition}\label{prob_optimal}
The optimal energy efficiency, denoted as $\tilde{\mathscr{\eta}}_{eff}^{*}
(\boldsymbol{\Omega}_{k,l}^{*})$, serves as a means to derive the resource allocation policy if and only if 
\begin{align}
\mathop {\max} 
\limits_{\boldsymbol{\Omega}_{k,l}} \: 
&
\sum\limits_{k\in \mathcal{K}} 
\tilde{\mathcal{R}}_{k}(\boldsymbol{\Omega}_{k,l}^{(d)})-
\varrho^{(d)} 
\mathcal{E}_{\rm{tot}}(\boldsymbol{\Omega}_{k,l}^{(d)}
-\!
\varsigma^{(q)}
\!\sum_{l\in\mathcal{L}}
\!\sum_{k\in \mathcal{K}}
\!\varpi_{k,l}=
\nonumber\\
&
\sum\limits_{k\in \mathcal{K}} 
\tilde{\mathcal{R}}_{k}(\boldsymbol{\Omega}_{k,l}^{*})-
\varrho^{*}~ 
\mathcal{E}_{\rm{tot}}(\boldsymbol{\Omega}_{k,l}^{*})
-
\varsigma^{(q)}
\!\sum_{l\in\mathcal{L}}
\!\sum_{k\in \mathcal{K}}
\!\varpi_{k,l}
=0,
\end{align}
for 
$\sum_{k\in \mathcal{K}} 
\tilde{\mathcal{R}}_{k}(\boldsymbol{\Omega}_{k,l}^{*})\geq 0$
and
$\mathcal{E}_{\rm{tot}}(\boldsymbol{\Omega}_{k,l}^{*})\geq0$,
where $\boldsymbol{\Omega}_{k,l}^{*}$ provides  the optimal solution to $\text{P}_5$. 
\end{proposition}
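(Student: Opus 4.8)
The statement is precisely Dinkelbach's theorem for the affinely penalized concave--linear fractional program obtained after the surrogate bound and rank relaxation, so the plan is to verify the hypotheses of that theorem for $\text{P}_3$/$\text{P}_5$. Fix the outer SCA index $(q)$ (equivalently, fix $\mho^{(q)}$) so that the feasible set
$$\mathcal{F}\triangleq\{(\boldsymbol{\Omega}_{k,l},\varpi_{k,l})\,:\,\eqref{p2-2},\eqref{p2_4},\eqref{p2_5},\eqref{p3-c1},\eqref{rank_one}\}$$
is a fixed, nonempty, compact convex set (compactness from the power budget \eqref{p2_4} together with $\boldsymbol{\Omega}_{k,l}\succeq\mathbf 0$), and define
$$F(\varrho)\triangleq\max_{(\boldsymbol{\Omega}_{k,l},\varpi_{k,l})\in\mathcal{F}}\Big\{\textstyle\sum_{k\in\mathcal{K}}\tilde{\mathcal{R}}_k(\boldsymbol{\Omega}_{k,l})-\varrho\,\mathcal{E}_{\rm tot}(\boldsymbol{\Omega}_{k,l})-\varsigma^{(q)}\sum_{l\in\mathcal L}\sum_{k\in\mathcal K}\varpi_{k,l}\Big\}.$$
Since the penalty term $-\varsigma^{(q)}\sum_{l,k}\varpi_{k,l}$ is affine it plays no structural role in the fractional argument; it is simply carried through on both sides of the claimed identity.

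First I would record the elementary properties of $F$. Because $\mathcal{E}_{\rm tot}(\boldsymbol{\Omega}_{k,l})\ge P_{\rm IRS}+NP_d+P_{\rm c}^{\rm BS}>0$ on $\mathcal{F}$, increasing $\varrho$ strictly decreases the objective at every feasible point, so $F$ is strictly decreasing; it is finite-valued, convex and continuous (pointwise supremum of affine functions of $\varrho$ over a compact set), and $F(\varrho)\to\mp\infty$ as $\varrho\to\pm\infty$. Hence $F$ has a unique root $\varrho^{\star}$, and I claim $\varrho^{\star}=\tilde{\mathscr{\eta}}_{\mathrm{eff}}^{*}$, the optimum of $\text{P}_3$ over $\mathcal F$.

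For the sufficiency (``if'') direction, suppose the displayed identity holds, i.e., $F(\varrho^{(d)})=0$ is attained at some $\boldsymbol{\Omega}_{k,l}^{*}$ with $\varrho^{(d)}=\varrho^{*}$. Then $\sum_{k}\tilde{\mathcal{R}}_k(\boldsymbol{\Omega}_{k,l}^{*})-\varrho^{*}\mathcal{E}_{\rm tot}(\boldsymbol{\Omega}_{k,l}^{*})=0$ (the affine penalty cancels), so, using $\mathcal{E}_{\rm tot}>0$, $\boldsymbol{\Omega}_{k,l}^{*}$ attains energy efficiency exactly $\varrho^{*}$; and for any feasible $\boldsymbol{\Omega}_{k,l}$, $F(\varrho^{*})=0$ yields $\sum_k\tilde{\mathcal R}_k(\boldsymbol{\Omega}_{k,l})-\varrho^{*}\mathcal E_{\rm tot}(\boldsymbol{\Omega}_{k,l})\le 0$, i.e. $\tilde{\mathscr{\eta}}_{\mathrm{eff}}(\boldsymbol{\Omega}_{k,l})\le\varrho^{*}$. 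Thus $\boldsymbol{\Omega}_{k,l}^{*}$ maximizes $\text{P}_3$ and $\varrho^{*}=\tilde{\mathscr{\eta}}_{\mathrm{eff}}^{*}$. For the necessity (``only if'') direction, let $\boldsymbol{\Omega}_{k,l}^{*}$ be optimal for $\text{P}_3$ with value $\varrho^{*}=\tilde{\mathscr{\eta}}_{\mathrm{eff}}^{*}$; then $\sum_k\tilde{\mathcal R}_k(\boldsymbol{\Omega}_{k,l})\le \varrho^{*}\mathcal E_{\rm tot}(\boldsymbol{\Omega}_{k,l})$ for all feasible points, with equality at $\boldsymbol{\Omega}_{k,l}^{*}$, which is exactly $F(\varrho^{*})=0$ attained at $\boldsymbol{\Omega}_{k,l}^{*}$, reproducing the identity with $\varrho^{(d)}=\varrho^{*}$. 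Finally I would note that this characterization justifies Algorithm~\ref{alg1}: the update $\varrho^{(d)}=\tilde{\mathscr{\eta}}_{\mathrm{eff}}^{(d)}(\boldsymbol{\Omega}_{k,l}^{(d)})$ is a Newton-type step on the monotone function $F$, so $F(\varrho^{(d)})\downarrow 0$ and $\varrho^{(d)}\uparrow\varrho^{*}$, and the stopping rule $|F(\varrho^{(d-1)})|\le\varepsilon$ detects the root.

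The main obstacle I anticipate is not the fractional-programming argument itself---that is textbook---but the bookkeeping that makes it legitimate here: one must argue that within a fixed outer (SCA/rank-relaxation) iteration the feasible set $\mathcal F$ is genuinely fixed, nonempty (so that the penalty initialization yields a feasible start and $F$ is finite), and such that the inner maximum is attained, i.e. $\text{P}_5$ is a solvable convex program. Nonemptiness hinges on the penalty construction around $\varpi_{k,l}$ permitting a feasible point, while attainment uses compactness from \eqref{p2_4}; both should be stated explicitly. A secondary subtlety is that $\varrho^{*}$ here is the optimum of the \emph{surrogate} problem $\text{P}_3$, not of the original $\text{P}_1$, so the equivalence must be read at the level of the per-iteration convexified problem.
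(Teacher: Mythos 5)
Your proposal is correct and follows the same fundamental route as the paper -- the classical Dinkelbach parametric characterization of a concave--linear fractional program -- but it is noticeably more complete than what the paper actually writes down. The paper's Appendix~C proves only the necessity direction: it takes $\varrho^{*}$ to be the optimal (penalized) efficiency, derives the inequality $\sum_{k}\tilde{\mathcal{R}}_{k}(\boldsymbol{\Omega}_{k,l})-\varrho^{*}\mathcal{E}_{\rm tot}(\boldsymbol{\Omega}_{k,l})-\varsigma^{(q)}\sum_{l,k}\varpi_{k,l}\,\mathcal{E}_{\rm tot}(\boldsymbol{\Omega}_{k,l})\le 0$ for all feasible points with equality at $\boldsymbol{\Omega}_{k,l}^{*}$, and concludes that the parametric maximum is zero; the ``if'' direction of the stated equivalence is never argued. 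You supply both directions, and in addition you isolate the structural facts the paper uses silently: that the feasible set is fixed within an outer SCA iteration, nonempty and compact (via the power budget and positive semidefiniteness), that $\mathcal{E}_{\rm tot}>0$ bounds the efficiency, and that the parametric value function $F(\varrho)$ is strictly decreasing, convex, continuous, and has a unique root -- which is what actually justifies the Newton-type update and the stopping rule in Algorithm~1. What your version buys is a genuinely two-sided proof of the ``if and only if'' claim and an explicit solvability argument; what the paper's shorter version buys is brevity at the cost of proving only half the proposition. One small caveat: your parenthetical ``the affine penalty cancels'' in the sufficiency step is only exact when $\varpi_{k,l}=0$ at the optimum (the intended limit of the penalty method); at an intermediate iterate the penalty shifts the root of $F$ away from the unpenalized optimal efficiency, so it would be cleaner to state the equivalence for the penalized objective throughout, as you in fact do elsewhere in the argument. (The paper itself is inconsistent on this point, sometimes multiplying the penalty by $\mathcal{E}_{\rm tot}$ and sometimes not.)
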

\begin{proof}
See Appendix~\ref{Appen_C}.
\quad\quad\quad\quad\quad
\quad\quad\quad\quad\quad
\quad\quad\quad
\IEEEQEDhere
\end{proof}
The optimization problem $\text{P}_5$ is now a convex problem and can be efficiently solved by standard convex optimization solvers such as CVX~\cite{citeulike:163662,archix,9123410}. 
Ultimately, we detail our proposed algorithm in~\textbf{Algorithm~\ref{alg1}}.

\subsection{ Sub-problem~$^{(2)}$: Optimizing $\boldsymbol{\Psi}_l$ with given $\boldsymbol{\omega}_{k,l}$}

In the second sub-problem, equipped with the optimal active beamforming matrices $\boldsymbol{\omega}_{k,l}$ from the preceding sub-problem, we progress with the optimization of the passive reflecting elements at the IRS, denoted by $\boldsymbol{\Psi}_l$. 
Given that the semi-definite matrix $\boldsymbol{\Omega}_{k,l}$ is provided, the optimization problem in $\text{P}_1$ shifts towards the maximization of the data-rate. 
The major hurdle in optimizing the phase shifts at the IRS arises due to the constraint \eqref{11g}.
Particularly, the constraint \eqref{11g} imposes a unit-modulus constraint, which presents a significant challenge in the quest to solve the problem.
As such, we start by defining
${\boldsymbol{e}}=( e^{j\phi_{1}},..., e^{j\phi_{N}})^H\in\mathbb{C}^{N\times1}$
and 
$\tilde{{\boldsymbol{e}}}=[{\boldsymbol{e}}^{T} \: \psi]^T\in\mathbb{C}^{(N+1)\times1}$, where $\psi\in\mathbb{C}$ is a dummy variable with $|\psi|=1$. 
To aid in crafting the solution, we further define
${\mathbf{E}}=\tilde{{\boldsymbol{e}}}\tilde{{\boldsymbol{e}}}^{H}\in\mathbb{C}^{(N+1)\times(N+1)}$.
Consequently, we derive the following:
\begin{align}
\left|
({\mathbf{h}}_{\rm{IRS},k}^{H}\boldsymbol{\Psi}_l{\mathbf{\mathbf{H}}}+\mathbf{ h}_{\rm{BS},k}^{H})\boldsymbol{\Omega}_{k,l}
\right|^2 
&\triangleq 
\text{Tr}(\mathbf{E}\mathbf{Z}_k\boldsymbol{\Omega}_{k,l}\mathbf{Z}_k^{H})
\nonumber\\
&=
\text{Tr}(\boldsymbol{\Omega}_{k,l}\mathbf{Y}_k),
\forall k \in \mathcal{K},
\end{align}
where
$\mathbf{Z}_k=[(\text{diag}({\mathbf{h}}_{\rm{IRS},k}^{H})\mathbf{H})^{T} \: \: \: \mathbf{ h}_{\rm{BS},k}^*]^{T}, \mathbf{Y}_k=\mathbf{Z}_k^{H}\mathbf{E}\mathbf{Z}_k.$
In a manner akin to $\text{P}_3$, we address the non-convex constraint \eqref{p3-3} and the objective function. To achieve this, we employ \textit{Lemma \ref{lemma_1}} and subsequently rewrite the data-rate function as 
$\tilde{\mathcal{R}}_k(\boldsymbol{\Psi}_l)= \mathcal{U}_k(\boldsymbol{\Psi}_l)-\mathcal{V}_k(\boldsymbol{\Psi}_l)$.
Now, we restate the optimization problem $\text{P}_1$ as follows:
\begin{subequations}
\begin{align}\hspace{-6mm}
\text{P}_6&:\mathop {\max}
\limits_{\mathbf{E},\boldsymbol{\Psi}_l} \: \:\:
\sum\limits_{k\in \mathcal{K}} \tilde{R}_{k}(\boldsymbol{\Psi}_l)\nonumber\\
s.t.&:
\tilde{\mathcal{R}}_k(\boldsymbol{\Psi}_l)\geq R_{\min,k}, \:
\forall k \in \mathcal{K}, 
\label{p6-1} \\		
&~~
{\rm{diag}}(\boldsymbol{\Psi}_l)=
\mathbf{1}_{N+1},
~\forall l \in \mathcal{L},
\label{p6-2}\\
&~~{\rm{rank}}(\mathbf{E})
\leq 1,
\label{p6-3} \\
&~~\boldsymbol{\Psi}_l\succeq 0, 
\ ~~~~~~~~~~ ~~~
\forall l \in \mathcal{L}
\label{p6-4},\\
&~~ 
\mathbf{E}\succeq \mathbf{0}.
\label{p6-5}
\end{align}
\end{subequations}

Following a similar approach as in $\text{P}_6$, and drawing on the insights from \textit{proposition \ref{propos_rank}}, we can substitute the rank constraint \eqref{p6-3} with a positive semi-definite constraint:
\begin{equation}\label{rank_one_2}
\zeta_{l}
\mathbf{I}_{N} - 
\Upsilon^{{(g)}^T}
\boldsymbol{\Psi}_l
\Upsilon^{(g)} 
\succeq\mathbf{0},
\forall l \in \mathcal{L},    
\end{equation} 
where $\Upsilon^{(g)}$ is an $N \times (N + 1)$ matrix at the $(g)$-th SCA iteration with its columns corresponding to the smallest $N$ eigenvectors of $\boldsymbol{\Psi}_l$. 
Furthermore, to sustain a rank of one for $\boldsymbol{\Psi}_l$, equation \eqref{rank_one_2} must hold true, assuming $\zeta_l = 0$.
Hence, by incorporating $\zeta_{l}$ as a penalty term into the objective function of $\text{P}_6$, we write the subsequent convex problem:
\vspace{1pt}
\begin{subequations}
\begin{align}
\text{P}_7&: \underset{{\mathbf{E}},\boldsymbol{\Psi}_l} {\text{max}} 
\sum\limits_{k\in \mathcal{K}} 
{\tilde{\mathcal{R}}_k(\boldsymbol{\Psi}_l)}
-\varkappa^{(g)} 
\sum\limits_{l\in \mathcal{L}} 
\zeta_{l}
\nonumber\\
s.t.&:~
\eqref{p6-1},\eqref{p6-2},
\eqref{p6-4},\eqref{p6-5},
\eqref{rank_one_2},
\nonumber
\end{align}
\end{subequations}
where $\varkappa^{(g)}$ is a sequences of increasing weights. 
The optimization problem $\text{P}_7$ now can be efficiently solved just as $\text{P}_5$~\cite{citeulike:163662,10059879}. The solution of these two sub-problems yields the suboptimal solution of $\text{P}_1$. 
\begin{proposition}\label{prob_optimal_complete}
The objective function of $\text{P}_1$ is ensured to be monotonically non-decreasing throughout the iterations of the proposed algorithm.
\end{proposition}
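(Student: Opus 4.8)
The plan is to establish the monotonic non-decreasing property of the objective of $\text{P}_1$ by a standard ``two-block alternating optimization with inner SCA'' argument, tracking the value of $\tilde{\mathscr{\eta}}_{\mathrm{eff}}$ through each sub-problem and across each outer AO round. First I would fix notation: let $(\boldsymbol{\omega}_{k,l}^{(t)},\boldsymbol{\Psi}_l^{(t)})$ denote the feasible iterate at the start of outer iteration $t$, and observe that by \textit{Lemma \ref{lemma_1}} the surrogate $\tilde{\mathcal{R}}_k$ is a global lower bound on $\mathcal{R}_k$ that is tight at the current expansion point, i.e. $\tilde{\mathcal{R}}_k(\cdot)\le \mathcal{R}_k(\cdot)$ everywhere and equality holds at $(\boldsymbol{\omega}_{k,l}^{(j)},\boldsymbol{\Psi}_l^{(j)})$. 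The same tightness holds for the rank-relaxation penalty terms, since at the SCA expansion point $\mho^{(q)}$ (resp. $\Upsilon^{(g)}$) are exactly the smallest eigenvectors of $\boldsymbol{\Omega}_{k,l}$ (resp. $\boldsymbol{\Psi}_l$), so $\varpi_{k,l}=0$ (resp. $\zeta_l=0$) is feasible and the penalized objective coincides with the unpenalized EE at that point.

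The key steps, in order, are: (i) In Sub-problem $^{(1)}$, show that each Dinkelbach iteration of $\text{P}_5$ is non-decreasing in $\varrho^{(d)}$ by the classical Dinkelbach argument (the root of the auxiliary function is unique and the iterates converge monotonically to $\tilde{\mathscr{\eta}}_{\mathrm{eff}}^*$), invoking \textit{Proposition \ref{prob_optimal}}; conclude $\tilde{\mathscr{\eta}}_{\mathrm{eff}}(\boldsymbol{\Omega}_{k,l}^{(t+1)})\ge \tilde{\mathscr{\eta}}_{\mathrm{eff}}(\boldsymbol{\Omega}_{k,l}^{(t)})$ because the previous iterate, together with $\varpi_{k,l}=0$, is feasible for $\text{P}_4$/$\text{P}_5$ and the solver only improves the objective. (ii) Pass from the surrogate back to the true objective: because $\tilde{\mathcal{R}}_k\le\mathcal{R}_k$ with equality at the expansion point, $\tilde{\mathscr{\eta}}_{\mathrm{eff}}(\boldsymbol{\Omega}_{k,l}^{(t)})=\mathscr{\eta}_{\mathrm{eff}}(\boldsymbol{\Omega}_{k,l}^{(t)})$ and $\tilde{\mathscr{\eta}}_{\mathrm{eff}}(\boldsymbol{\Omega}_{k,l}^{(t+1)})\le\mathscr{\eta}_{\mathrm{eff}}(\boldsymbol{\Omega}_{k,l}^{(t+1)})$, which chains into $\mathscr{\eta}_{\mathrm{eff}}(\boldsymbol{\Omega}_{k,l}^{(t+1)})\ge \mathscr{\eta}_{\mathrm{eff}}(\boldsymbol{\Omega}_{k,l}^{(t)})$; also note feasibility of $\text{P}_1$'s rate constraint is preserved since $\mathcal{R}_k\ge\tilde{\mathcal{R}}_k\ge R_{\min,k}$. (iii) Repeat the identical argument for Sub-problem $^{(2)}$ ($\text{P}_6$/$\text{P}_7$), where the objective is the sum-rate numerator with the denominator held fixed, so an increase in $\sum_k\tilde{\mathcal{R}}_k$ directly gives an increase in EE; again the previous $\boldsymbol{\Psi}_l$ iterate with $\zeta_l=0$ is feasible, so the solver does not decrease the objective, and the surrogate-to-true inequality closes the loop. (iv) Concatenate the two monotone steps within one outer AO round and then induct over $t$ to obtain $\mathscr{\eta}_{\mathrm{eff}}^{(t+1)}\ge\mathscr{\eta}_{\mathrm{eff}}^{(t)}$ for all $t$; boundedness of the feasible set (via \eqref{p2_4}) then also yields convergence of the objective sequence as a corollary.

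The main obstacle I expect is the bookkeeping around the nested loops — the Dinkelbach iteration $(d)$, the SCA/penalty iteration $(q)$ for the rank relaxation, and the outer AO iteration — and in particular arguing that the penalized surrogate objective in $\text{P}_4$/$\text{P}_5$ (resp. $\text{P}_7$) still upper-bounds... rather lower-bounds the genuine EE after the penalty term is subtracted. The clean way around this is to note that we only need the inequality \emph{at the expansion point}, where the penalty vanishes ($\varpi_{k,l}=0$ is feasible and optimal to include), so the penalty never degrades the comparison between consecutive iterates; one must, however, be slightly careful that the penalty weight updates $\varsigma^{(q)}=\min(\nu\varsigma^{(q-1)},\vartheta_{\max})$ do not break the warm-start feasibility, which they do not because increasing the weight on a term that is zero at the previous iterate leaves that iterate's objective value unchanged. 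A secondary subtlety is that the per-iteration improvement relies on the inner convex programs being solved to (global) optimality, which holds since $\text{P}_5$ and $\text{P}_7$ are convex by \textit{Proposition \ref{prob_optimal}} and the discussion preceding $\text{P}_7$; I would state this as a standing assumption rather than re-derive it.
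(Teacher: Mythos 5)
Your proposal follows essentially the same route as the paper's Appendix D: a standard two-block alternating-optimization chain of inequalities showing that each sub-problem, solved to optimality with the other block fixed, cannot decrease the objective of $\text{P}_1$ because the previous iterate remains feasible. You are in fact more careful than the paper --- which simply equates the objectives of $\text{P}_1$, $\text{P}_5$, and $\text{P}_7$ at the iterates --- about the surrogate-tightness step ($\tilde{\mathcal{R}}_k \le \mathcal{R}_k$ with equality at the expansion point) and about the penalty terms; the one soft spot is your claim that $\varpi_{k,l}=0$ is feasible at the previous iterate, which holds only once that iterate is exactly rank one (otherwise the warm start needs $\varpi_{k,l}$ equal to the second-largest eigenvalue, and the increasing weight $\varsigma^{(q)}$ can then lower the warm-start penalized value), a point the paper's own proof glosses over entirely.
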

\begin{proof}
See Appendix~\ref{Appen_D}.
\quad\quad\quad\quad\quad\quad\quad\quad\quad\quad\quad\quad\quad$\blacksquare$
\end{proof}

Ultimately, the final iterative AO algorithm that solves sub-problems is presented in \textbf{Algorithm \ref{alg_2}}.

\begin{algorithm}[t]
\caption{\strut Proposed Iterative AO algorithm}
\begin{algorithmic}[1]\label{alg_2}
\renewcommand{\algorithmicrequire}{\textbf{Input:}}
\renewcommand{\algorithmicensure}{\textbf{Output:}}
\REQUIRE Set $s=0$, maximum number of iteration ${S}_\text{max}$, and initialize the beamformer matrix and phase shifts as $\boldsymbol{\Omega}_{k,l}=\boldsymbol{\Omega}_{k,l}^{0}$ and $\boldsymbol{\Psi}_l=\boldsymbol{\Psi}_l^0$, respectively.\\    
\STATE \textbf{repeat}\\
\STATE \quad Solve problem $\text{P}_5$ for given $\boldsymbol{\Psi}_l^s$, and obtain the\\ \quad optimal solution $\boldsymbol{\Omega}_{k,l}^s$ based on~\textbf{Algorithm~\ref{alg1}}.
\STATE \quad Solve problem $\text{P}_7$ for given $\boldsymbol{\Omega}_{k,l}^s$, and obtain the\\ \quad optimal solution $\boldsymbol{\Psi}_l^s$.
\STATE \quad $s \leftarrow s+1$.
\STATE   \textbf{until} ${s}={S}_{\text{max}}$
\STATE   \textbf{return} $\{\boldsymbol{\Omega}_{k,l}^{{*}},\boldsymbol{\Psi}_l^*\}=$ $\{\boldsymbol{\Omega}_{k,l}^{s},\boldsymbol{\Psi}_l^s\}.$
\end{algorithmic}
\end{algorithm}	

\vspace{1mm}
\section{Numerical Results and Discussions}
\label{section_5}
This section provides simulation results to investigate the efficiency of the proposed algorithm in IRS-supported downlink URLLC systems, utilizing finite blocklength codes.
A rectangular region with dimensions of $(100,100)$ meters is considered. 
The BS is positioned at $(0, 0)$ meters, the IRS is located at $(50,0)$ meters, and all users are scattered randomly within the boundaries of this rectangle region. 
The path loss model is expressed as $35.3+37.6\log_{10}(d_k)~\rm{[dBm]}$, where $d_k$ refers to the distance in kilometers between the BS and the $k$-th URLLC user~\cite{10100913}. 
The convergence tolerance for the proposed AO algorithm, which is based on rank relaxation, SCA, and the Dinkelback method, is set at $10^{-2}$. 
It is assumed that the thermal noise density stands at $-174~\rm{[dBm\slash Hz]}$. 
In addition, the maximum probability of decoding error for URLLC user $k$, represented as $\epsilon_{k}$, is set to be $10^{-7}$.
Furthermore, the simulation parameters are configured with $K=4$, $M=5$, and $R_{\min,k}= 1.6~\rm{[bits\slash Sec\slash Hz]}$ , in accordance with the methodology outlined in references~\cite{10014778} and \cite{10100913}.

\begin{figure}[t]
\centering
\includegraphics[width=0.51\textwidth] {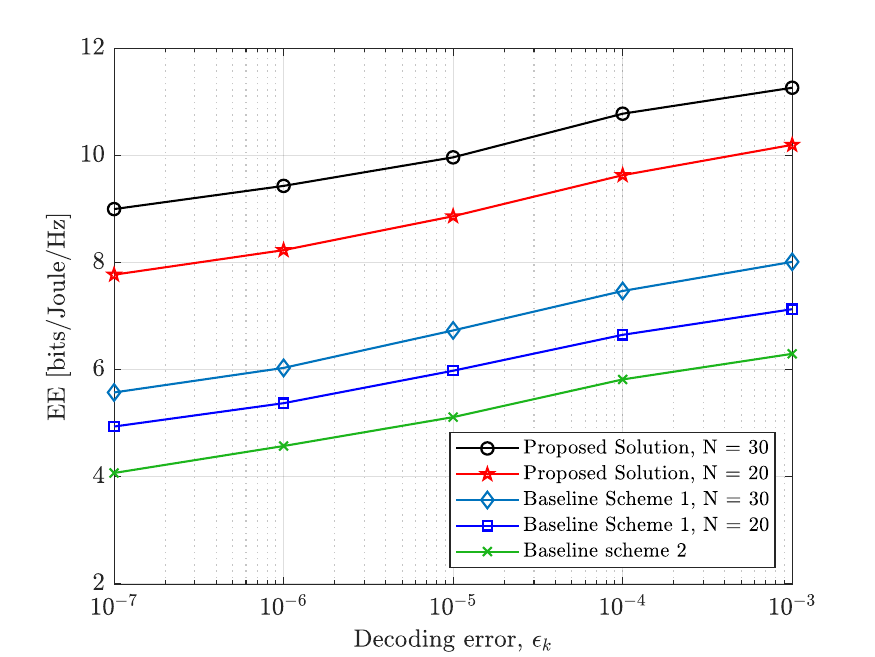}
\caption{\small Impact of tolerable decoding error,~$\epsilon_k$, on the EE.}\label{fig1}
\end{figure}

Fig. \ref{fig1} depicts the average EE in correlation with diverse maximum decoding error probability values, denoted as $\epsilon_{k,{\max}}$, keeping the block code length constant at $250$ symbols, represented as $m_d=250$. The plotted data brings to light an augmenting pattern in EE as the permissible decoding error broadens. This increment is attributable to the inverse proportionality of $Q^{-1}(\epsilon_k)$ and $\epsilon_k$. The dominating contributor to this rise is a diminished rate loss $\mathcal{V}_k(\boldsymbol{\omega}_{k,l},\boldsymbol{\Psi}_l)$ of \eqref{rate} correlating with an enlargement in the maximum acceptable decoding error probability. This correlation ensures that the minimal data-rate obligation is met at reduced transmission power, thereby amplifying EE. The figure also signifies an upsurge in EE when the number of reflecting elements is increased.

Moreover, the figure also includes a comparative study with two basic models. \textit{Baseline scheme 1} operates under a fixed beamforming policy at the IRS~\cite{10100913}, whereas \textit{Baseline scheme 2} considers a system without IRS. Our proposed methodology's performance, as demonstrated by the figure, surpasses that of baseline scheme 2 by $78.9\%$. This superior performance results from the incorporation of IRS and the harmonized optimization of active and passive beamforming matrices at the BS and IRS. This strategy displays an improved performance when compared to baseline scheme 1, $51.7\%$ and $48.8\%$ for $N=30$ and $N=20$, respectively, thereby underscoring the substantial benefits of integrating IRS into our system.

\begin{figure}[t]
\centering
\includegraphics[width=0.51\textwidth] {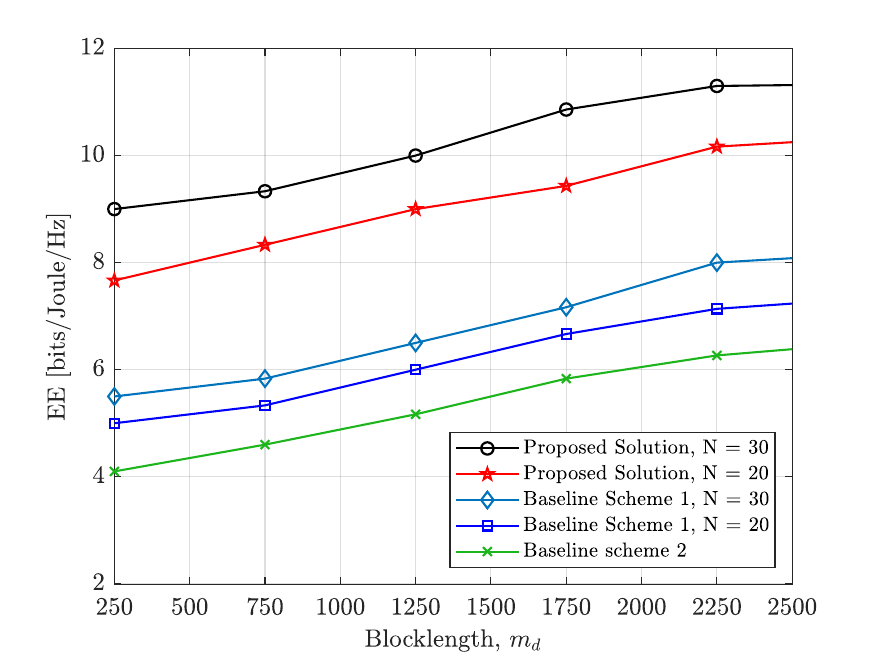}
\caption{\small Impact of the blocklength, $m_d$, on the EE.}\label{fig2}
\end{figure}  
Fig.~\ref{fig2} illustrates the impact of the blocklength size, denoted as $m_d$, on the achievable EE at a constant $\epsilon_{k}=10^{-7}$. 
With an increase in $m_d$, the EE initially experiences a moderate boost, followed by a slow elevation until it plateaus. In comparison to the scenario with an IRS, the scenario without it consistently exhibits lower EE, independent of the quantity of reflecting elements. 
Smaller $N$ values necessitate higher transmit power to meet QoS standards, leading to a reduction in EE. This reduction arises from the amplified interference in the data-rate function due to multi-user interference in the SINR function, resulting in an increase in transmit power, a decrease in data-rate, and consequently negatively pruned EE.

\section{Conclusion}\label{section_6}
This paper investigated the resource allocation for a downlink multiuser intelligent reflecting surface (IRS)-aided ultra-reliable low-latency communication (URLLC) system. In particular, the resource allocation design with active/passive beamforming was formulated to maximize the energy efficiency (EE) while considering URLLC Quality of Service (QoS) requirements for each user with adopting the short packet transmission. Subsequent to this, we brought forth an innovative iterative rank relaxation and adopted the successive convex approximation (SCA) approach, as well as a penalty-based method to solve each sub-problem. Simulation results demonstrated the efficacy of the proposed method compared to a system without or with unoptimized IRSs. Building upon these findings, our research could serve as a foundation for further studies on rank-constrained resource allocation strategies in comparable IRS-aided URLLC systems. 

\section{Acknowledgement}
This work was supported by the CHIST-ERA grant SAMBAS (CHIST-ERA20-SICT-003), with funding from FWO, ANR, NKFIH, and UKRI, and, in part, by the framework of the Recovery Plan, Transformation and Resilience (UNICO I+D 5G 2021, nr. TSI-063000-2021-6-Open6G Joint Open 6G Communications and Sensing), funded by the Spanish Ministry of Economic Affairs and Digital Transformation and European Union - NextGeneration EU.

\appendices
\section{Proof of the Lemma~\ref{lemma_1}}\label{Appen_A}
According to the data-rate equation in~\eqref{rate}, we can drive a lower bound on 
$\mathcal{U}_k
(\boldsymbol{\omega}_{k,l},\boldsymbol{\Psi}_l)$
and an upper bound on
$
\mathcal{V}_k
(\boldsymbol{\omega}_{k,l},\boldsymbol{\Psi}_l)$.
In order to approximate the non-convex logistic function, we utilize the Majorization Minimization (MM) algorithm \cite{Big-m,JalaliThesis,MM}. This is achieved by constructing a surrogate function using a first-order Taylor approximation, which is expressed as follows:
\begin{align}
\mathcal{U}_k(\textbf{a})
&\simeq
\mathcal{U}_k(\textbf{a}^{(j)})+
\nabla_{\textbf{a}}\mathcal{U}_k
(\textbf{a}^{(j)}).(\textbf{a}-\textbf{a}^{(j)})
\triangleq
\tilde{\mathcal{U}}_k(\textbf{a}), 
\label{F3-42-}
\\
\mathcal{V}_k(\textbf{b})
&\simeq
\mathcal{V}_k(\textbf{b}^{(j)})+
\nabla_{\textbf{b}}\mathcal{V}_k
(\textbf{b}^{(j)}).(\textbf{b}-\textbf{b}^{(j)})
\triangleq
\tilde{\mathcal{V}}_k(\textbf{b}),
\label{F3-42}
\end{align}
where $\textbf{a},\textbf{b}=\{\boldsymbol{\omega}_{k,l},\boldsymbol{\Psi}_l\}$ and $j$ is the iteration number.
Moreover, $\textbf{a}^{(j)}$ and $\textbf{b}^{(j)}$ denote the solutions of the problem at $(j)^{{th}}$ iteration.
Based on \eqref{bound_c1}-\eqref{bound_c5}, we first rewrite the SINR formula as follows: 
\begin{equation}\label{sinr_revise}
\Gamma_{k,l}= 
\frac{\left|\mathfrak{a}_{k,k,l}\right|^2}{\mathfrak{b}_{k,l}-\left|\mathfrak{a}_{k,k,l}\right|^2}, 
\forall k \in \mathcal{K},
\forall l \in \mathcal{L}.
\vspace*{2mm}
\end{equation}
Equivalently, we can rewrite the channel dispersion as:
\vspace{2mm}
\begin{align}
\Delta_{k,l}=
(\log_2 e)^2
\left(
1
-
\left(
\frac{\mathfrak{b}_{k,l}-\left|\mathfrak{a}_{k,k,l}\right|^2}{\mathfrak{b}_{k,l}}
\!
\right)^2
\right), 
\nonumber\\
\forall k \in \mathcal{K},
\forall l \in \mathcal{L}.   
\end{align}
Subsequently, we have: 
\begin{align}
\vspace*{1mm}
\mathcal{U}_k(\boldsymbol{\omega}_{k,l},\boldsymbol{\Psi}_l)
&=
-\sum\limits_{l\in\mathcal{L}}
\log_2 
\left(
1 -
\frac{\left|\mathfrak{a}_{k,k,l}\right|^2}{\mathfrak{b}_{k,l}}
\right)
\nonumber\\
&=
\sum\limits_{l\in\mathcal{L}}
\mathcal{U}_{k,l}(\boldsymbol{\omega}_{k,l},\boldsymbol{\Psi}_l),  
~~~~~~~~\forall k \in \mathcal{K},  
\end{align}
\begin{align}
\mathcal{V}_k(\boldsymbol{\omega}_{k,l},\boldsymbol{\Psi}_l)
&=
\sum\limits_{l\in\mathcal{L}}
\beta_k
\sqrt{\Delta_{k,l}}
\nonumber\\
&=
\sum\limits_{l\in\mathcal{L}}
\mathcal{V}_{k,l}(\boldsymbol{\omega}_{k,l},\boldsymbol{\Psi}_l),~~~~~~~~\forall k \in 
\mathcal{K}.
\end{align}
Thus, employing the MM technique allows us to derive the following lower limit for $\mathcal{U}_{k,l}(\boldsymbol{\omega}_{k,l},\boldsymbol{\Psi})$:
\begin{align}
\mathcal{U}_{k,l}(\boldsymbol{\omega}_{k,l},\boldsymbol{\Psi})
\geq
&-
\frac{\left|\mathfrak{a}_{k,k,l}^{(j)}\right|^2}
{\mathfrak{b}_{k,l}^{(j)}\left(\mathfrak{b}_{k,l}^{(j)}-\left|\mathfrak{a}_{k,k,l}^{(j)}\right|^2\right)}
\mathfrak{b}_{k,l}
\nonumber\\
&+
2\Re
\left\{
\frac{\mathfrak{a}_{k,k,l}^{(j)}\mathfrak{a}_{k,k,l}^{H}}
{\mathfrak{b}_{k,l}^{(j)}-\left|\mathfrak{a}_{k,k,l}^{(j)}\right|^2}
\right\}+ 
\xi_{k,l}^{\mathcal{U}},
\nonumber\\
&
~~~~~~~~~~
~~~~~~~~~~
\forall k \in \mathcal{K},
\forall l \in \mathcal{L},
\end{align}
where
\begin{equation}
\xi_{k,l}^{\mathcal{U}}=  
\log_2 (1 + \Gamma_{k,l}^{(j)})-
\frac{|\mathfrak{a}_{k,k,l}^{(j)}|^2}
{\mathfrak{b}_{k,l}^{(j)}(\mathfrak{b}_{k,l}^{(j)}-|\mathfrak{a}_{k,k,l}^{(j)}|^2)}.
\end{equation}
Similarly, we use the first-order Taylor approximation to derive the following upper limit for $\mathcal{V}_{k,l}(\boldsymbol{\omega}_{k,l},\boldsymbol{\Psi})$:
\begin{align}
\mathcal{V}_{k,l}(\boldsymbol{\omega}_{k,l},\boldsymbol{\Psi})
\leq
\beta_k
\sqrt{\Delta_{k,l}^{j}}
+
\frac{(\log_2 e)^2\beta_k}{2\sqrt{\Delta_{k,l}^{j}}}
&\left(\!
1-\frac{1}{\left(1+\Gamma_{k,l}\right)^2}
\!\right), 
\nonumber
\\
&\!\!\!\!\!\!\!\!
\forall k \in \mathcal{K},
\forall l \in \mathcal{L}.
\end{align}
Now, by defining $\Xi_{k,l}\triangleq  1\slash (1+\Gamma_{k,l})$ and according to~\cite{10100913}, we can rewrite: 
\begin{align}
\frac{1}{\left(1+\Gamma_{k,l}\right)^2}
&=
\Xi_{k,l}^2
\geq
2\Xi_{k,l}^{(j)}\Xi_{k,l}
-(\Xi_{k,l}^{(j)})^2
\nonumber\\
&=
\frac{2}{\left(1+\Gamma_{k,l}^{(j)}\right)\left(1\!+\!\Gamma_{k,l}\right)}
-
\frac{1}{\left(1\!+\!\Gamma_{k,l}^{(j)}\right)^2},
\nonumber\\
&~~~~~~~~~~~~~~
~~~~~~~~~~~~~~~
\forall k \in \mathcal{K},
\forall l \in \mathcal{L},
\label{MM_delta_2}
\end{align}
where the inequality holds due to convexity of $(\Xi_{k,l})^2$.
The term $1/(1+\Gamma_{k,l})$ in \eqref{MM_delta_2} is non-convex due to its fractional form. We rewrite this term as follows and then apply MM: 
\begin{align}
\frac{1}{1+\Gamma_{k,l}} = 
\frac{\mathfrak{b}_{k,l}-\left|\mathfrak{a}_{k,k,l}\right|^2}{\mathfrak{b}_{k,l}}
\geq
-\frac{\left(\mathfrak{b}_{k,l}^{(j)}-\big|\mathfrak{a}_{k,k,l}^{(j)}\big|^2\right)\mathfrak{b}_{k,l}}{\big|\mathfrak{b}_{k,l}^{(j)}\big|^2}
\nonumber\\
~~~~~+2\Re
\left\{
\frac{
\sum\limits_{i \ne k, i  \in \mathcal{K}}
\mathfrak{a}_{i,k,l}^{(j)}\mathfrak{a}_{i,k,l}^H
+
\sigma^2_k
}{\mathfrak{b}_{k,l}^{(j)}}
\right\}
,
\forall k \in \mathcal{K},
\forall l \in \mathcal{L},
\label{MM_delta_3}
\end{align}
where, again, the inequality holds due to the convexity with respect to the $\mathfrak{a}_{i,k,l}$ and $\mathfrak{b}_{k,l}$. Thus, we derive an upper bound  as:  
\begin{align}
\begin{split}
\mathcal{V}_{k,l}(\boldsymbol{\omega}_{k,l},\boldsymbol{\Psi})
&\leq
\frac{(\log_2 e)^2\beta_k}{\sqrt{(\Gamma^{(j)}_{k,l})^2+2\Gamma^{(j)}_{k,l}}}
\left(
\frac{\left(\mathfrak{b}_{k,l}^{(j)}-\big|\mathfrak{a}_{k,k,l}^{(j)}\big|^2\right)\mathfrak{b}_{k,l}}{\big|\mathfrak{b}_{k,l}^{(j)}\big|^2}
\nonumber
\right.
\\
&
\left.
~-2\Re
\left\{
\frac{
\sum\limits_{i \ne k, i  \in \mathcal{K}}
\mathfrak{a}_{i,k,l}^{(j)}\mathfrak{a}_{i,k,l}^H
+
\sigma^2_k
}{\mathfrak{b}_{k,l}^{(j)}}
\right\}
\right)
+
\xi_{k,l}^{\mathcal{V}},
\nonumber\\
&
~~~~~~~~~~~~~~~~~
~~~~~~~~~~~~~~~~~
\forall k \in \mathcal{K},
\forall l \in \mathcal{L},
\end{split}
\tag{39}
\end{align}
where
\begin{align}
\xi_{k,l}^{\mathcal{V}}
&= 
\frac{(2\Delta_{k,l}^{(j)}+(\log_2 e)^2)\beta_k}{2\sqrt{\Delta_{k,l}^{(j)}}}
\nonumber\\
&+
\frac{(\log_2 e)^2\beta_k}{2(1+\Gamma^{(j)}_{k,l})\sqrt{(\Gamma^{(j)}_{k,l})^2+2\Gamma^{(j)}_{k,l}}},
\forall k \in \mathcal{K},
\forall l \in \mathcal{L}.
\tag{40}
\end{align}
Ultimately, the lower bound of the data-rate function in~\eqref{rate} can be approximated as:
\begin{equation}
\tilde{\mathcal{R}}_{k}(\boldsymbol{\omega}_{k,l},\boldsymbol{\Psi}_l)
\triangleq 
\sum\limits_{l\in\mathcal{L}}
\big(
\mathcal{U}_{k,l}(\boldsymbol{\omega}_{k,l},\boldsymbol{\Psi})
-
\mathcal{V}_{k,l}(\boldsymbol{\omega}_{k,l},\boldsymbol{\Psi})
\big),
\forall k \in \mathcal{K}.
\tag{41}
\end{equation}
This completes the proof. 
\quad\quad\quad\quad\quad
\quad\quad\quad\quad\quad
\quad\quad\quad\quad
\IEEEQEDhere

\section{Proof of the Proposition~\ref{propos_rank}}\label{Appen_B}
Assuming the nonnegative eigenvalues of 
$\boldsymbol{\Omega}_{k,l}$
are arranged in descending order as 
$[\kappa_{M}; \kappa_{M-1}; ... ; \kappa_{1}]$, 
we can exploit the relationship between an eigenvector's Rayleigh quotient and its associated eigenvalue. Consequently, the matrix 
$\varpi_{k,l}\mathbf{I}_{M-1} - \mho^{T}\boldsymbol{\Omega}_{k,l}\mho,$
$\forall k \in \mathcal{K},
\forall l \in \mathcal{L},$
takes on the form of a diagonal matrix, with its diagonal elements set as 
$[\varpi_{k,l}-\kappa_{M-1}; \varpi_{k,l}-\kappa_{M-2}; ... ; \varpi_{k,l}-\kappa_{1}]$.
In light of this, we observe that 
$\boldsymbol{\Omega}_{k,l}$
possesses 
$M-1$ 
smallest eigenvalues all being zero if and only if the conditions 
$\varpi_{k,l}\mathbf{I}_{M-1} - \mho^{T}\boldsymbol{\Omega}_{k,l}\mho \succeq\mathbf{0}$
and 
$\varpi_{k,l} = 0,$
$\forall k \in \mathcal{K},
\forall l \in \mathcal{L},$
are satisfied simultaneously. 
Consequently, 
$\boldsymbol{\Omega}_{k,l}$
qualifies as a rank one matrix under these conditions~\cite{doi:10.1137/17M1147214}.~\quad\quad\quad\quad\quad\quad\quad\quad\quad\quad\quad\quad\quad\quad\quad\quad\quad\quad\quad\quad~\IEEEQEDhere

\section{Proof of the Proposition~\ref{prob_optimal}}\label{Appen_C}

Let's denote $\varrho^{*}$ be the optimal solution, corresponding to the optimal resource allocation policy $\boldsymbol{\Omega}^{*}_{k,l}$ of the objective function in $\text{P}_5$, 
that is:
\begin{align}
\varrho^{*} =
\mathop 
{\max}
\limits_{\boldsymbol{\Omega}_{k,l}} 
\: 
\frac{\sum\limits_{k\in \mathcal{K}}\tilde{\mathcal{R}}_k(\boldsymbol{\Omega}_{k,l})}{\mathcal{E}_{\rm{tot}}(\boldsymbol{\Omega}_{k,l})}
-
\varsigma^{(q)}
\sum_{l\in\mathcal{L}}
\sum_{k\in \mathcal{K}}
\varpi_{k,l}.
\tag{42}
\end{align}
Therefore, the optimal EE must satisfy the following inequality:
\begin{align}
\varrho^{*}=
&
\frac{
\sum\limits_{k\in \mathcal{K}}
\tilde{\mathcal{R}}^{*}_k(\boldsymbol{\Omega}^{*}_{k,l})}
{\mathcal{E}^{*}_{\rm{tot}}(\boldsymbol{\Omega}^{*}_{k,l})}
-
\varsigma^{(q)}
\sum_{l\in\mathcal{L}}
\sum_{k\in \mathcal{K}}
\varpi_{k,l}
\nonumber
\\
\geq &
\frac{
\sum\limits_{k\in \mathcal{K}}
\tilde{\mathcal{R}}^{}_k(\boldsymbol{\Omega}^{}_{k,l})}
{\mathcal{E}^{(d)}_{\rm{tot}}(\boldsymbol{\Omega}^{}_{k,l})}
-
\varsigma^{(q)}
\sum_{l\in\mathcal{L}}
\sum_{k\in \mathcal{K}}
\varpi_{k,l}.
\tag{43}
\end{align}
It becomes readily apparent that:
\begin{align}
& 
\sum\limits_{k\in \mathcal{K}} 
\!\!\tilde{\mathcal{R}}_{k}(\boldsymbol{\Omega}_{k,l}^{})
\!-\!\!
\varrho^{*}\!
\mathcal{E}_{\rm{tot}}(\boldsymbol{\Omega}_{k,l}^{})
\!-\!
\varsigma^{(q)}
\!\!\sum_{l\in\mathcal{L}}
\!\sum_{k\in \mathcal{K}}
\!\varpi_{k,l}
\mathcal{E}_{\rm{tot}}(\boldsymbol{\Omega}_{k,l}^{})
\!\leq 
\!0,
\tag{44}
\end{align}
Thus, it can be inferred that:
\begin{align}
& 
\sum\limits_{k\in \mathcal{K}} 
\!\!\tilde{\mathcal{R}}_{k}(\boldsymbol{\Omega}_{k,l}^{*})
\!-\!
\varrho^{*}\!
\mathcal{E}_{\rm{tot}}(\boldsymbol{\Omega}_{k,l}^{*})
\!-\!\!
\varsigma^{(q)}
\!\sum_{l\in\mathcal{L}}
\!\sum_{k\in \mathcal{K}}
\!\varpi_{k,l}
\mathcal{E}_{\rm{tot}}(\boldsymbol{\Omega}_{k,l}^{*})
\!= \!0,\!
\tag{45}
\end{align}
Therefore, we have:
\begin{align}
\mathop {\max} 
\limits_{\boldsymbol{\Omega}_{k,l}} \: 
& \sum\limits_{k\in \mathcal{K}} 
\tilde{\mathcal{R}}_{k}(\boldsymbol{\Omega}_{k,l}^{})-
\varrho^{*}~ 
\mathcal{E}_{\rm{tot}}(\boldsymbol{\Omega}_{k,l})
\nonumber\\
& ~~~~~~~~~~~~~~~~\:
-
\varsigma^{(q)}
\!\sum_{l\in\mathcal{L}}
\!\sum_{k\in \mathcal{K}}
\!\varpi_{k,l}
\mathcal{E}_{\rm{tot}}(\boldsymbol{\Omega}_{k,l}^{})
=0,
\tag{46}
\end{align}
and this can be attained through the resource allocation policy. This concludes the proof.
\quad\quad\quad\quad\quad
\quad\quad\quad\quad\quad 
\quad\quad\quad\quad \
$\blacksquare$

\section{Proof of the Proposition~\ref{prob_optimal_complete}}\label{Appen_D}
Let's denote the objective functions of $\text{P}_1$, $\text{P}_5$, and $\text{P}_7$ respectively as $\mathcal\daleth_{\text{P}_1}$, $\mathcal\daleth_{\text{P}_5}$, and $\mathcal\daleth_{\text{P}_7}$.
Also,
in consideration of $\{\boldsymbol{\Omega}_{k,l}^{s}, \boldsymbol{\Psi}_l^{s}\}$ and $\{\boldsymbol{\Omega}_{k,l}^{s-1}, \boldsymbol{\Psi}_l^{s-1}\}$ as the feasible solutions of $\text{P}_1$ in the $s$-th and $(s-1)$-th iterations respectively, we can establish the following inequalities:  
\begin{align}\hspace*{6mm}
\mathcal\daleth_{\text{P}_1}(\boldsymbol{\Omega}_{k,l}^{s},\boldsymbol{\Psi}_l^{s})
&= \mathcal\daleth_{\text{P}_7}(\boldsymbol{\Omega}_{k,l}^{*},\boldsymbol{\Psi}_l^{s})
\nonumber\\
&\geq
\mathcal\daleth_{\text{P}_7}( \boldsymbol{\Psi}_l^{s-1})
\nonumber\\
&=
\mathcal\daleth_{\text{P}_1}( \boldsymbol{\Omega}_{k,l}^{s},\boldsymbol{\Psi}_l^{s-1}), \label{c1}
\tag{47}
\end{align}
\begin{align}
\mathcal\daleth_{\text{P}_1}(\boldsymbol{\Omega}_{k,l}^{s},\boldsymbol{\Psi}_l^{s-1})
&=
\mathcal\daleth_{\text{P}_5}( \boldsymbol{\Omega}_{k,l}^{s},\boldsymbol{\Psi}_l^{*})
\nonumber \\
&\geq
\mathcal\daleth_{\text{P}_5}(\boldsymbol{\Omega}_{k,l}^{s-1})
\nonumber\\
&=
\mathcal\daleth_{\text{P}_1}( \boldsymbol{\Omega}_{k,l}^{s-1},\boldsymbol{\Psi}_l^{s}). 
\tag{48}
\label{c2}
\end{align}
By utilizing inequalities \eqref{c1} and \eqref{c2}, we can guarantee the improvement in the value of the objective function of $\text{P}_1$ after every iteration.
\quad\quad\quad\quad\quad\quad\quad\quad \quad\quad\quad\quad\quad\quad\quad\quad\quad\quad \ \ $\blacksquare$

\bibliography{ref}
\bibliographystyle{ieeetr}
\end{document}